\newtheorem{theorem}{Theorem}
\newtheorem{lemma}{Lemma}
\renewcommand\and{\end{tabular}\kern-\tabcolsep\ and\ \kern-\tabcolsep\begin{tabular}[t]{c}}
\let\origthanks\thanks
\renewcommand\thanks[1]{\begingroup\let\rlap\relax\origthanks{#1}\endgroup}
\let\originalleft\left
\let\originalright\right
\renewcommand{\left}{\mathopen{}\mathclose\bgroup\originalleft}
\renewcommand{\right}{\aftergroup\egroup\originalright}
\DeclareMathOperator*{\argmin}{arg\,min}
\newcommand{\KofN}{$k$-of-$n$}
\newcommand{\GGHK}{Gkenosis~et~al.}
\newcommand{\ALG}{\ensuremath{\mathrm{A}}\xspace}
\newcommand{\OPT}{\ensuremath{\mathrm{OPT}}\xspace}
\newcommand{\Fail}{\ensuremath{\mathrm{fail}}\xspace}%
\newcommand{\Succ}{\ensuremath{\mathrm{succ}}\xspace}%
\newcommand{\ALGSz}{\ensuremath{\ALG_{\Fail}}\xspace}
\newcommand{\ALGSo}{\ensuremath{\ALG_{\Succ}}\xspace}
\newcommand{\Ch}{\ensuremath{\mathrm{ch}}\xspace}%
\newcommand{\ALGCh}{\ensuremath{\ALG_{\Ch}}\xspace}
\newcommand{\RdRn}{\ensuremath{\textsc{RoundRobin}}\xspace}
\newcommand{\TwRR}{\ensuremath{\mathrm{3R}}\xspace}
\newcommand{\TwoRR}{\ensuremath{\mathrm{2R}}\xspace}
\newcommand{\GenRR}{\ensuremath{\mathrm{GR}}\xspace}
\newcommand{\Inst}{\ensuremath{I}\xspace}
\newcommand{\PhaseTwoCommon}{\ensuremath{E}\xspace}
\newcommand{\PhaseTwoActive}[1]{\ensuremath{\PhaseTwoCommon_{#1}}\xspace}
\newcommand{\PhaseTwoInactive}[1]{\ensuremath{\overline{\PhaseTwoCommon}_{#1}}}
\newcommand{\Hsp}{\ensuremath{H}\xspace}
\newcommand{\HSfail}{\Fail\xspace}
\newcommand{\HSsucc}{\Succ\xspace}
\newcommand{\ChSet}{\ensuremath{C}\xspace}
\newcommand{\UnqSet}{\ensuremath{U}\xspace}
\newcommand{\PrivSet}{\ensuremath{P}\xspace}
\newcommand{\VizSet}{\ensuremath{S}\xspace}
\newcommand{\RB}[1]{\ensuremath{\left( {#1} \right)}}
\newcommand{\SB}[1]{\ensuremath{\left[ {#1} \right]}}
\newcommand{\CB}[1]{\ensuremath{\{ {#1} \}}}
\newcommand{\Ex}[1]{\ensuremath{\mathbb{E} \SB{#1}}}
\newcommand{\Exc}[2]{\Ex{\left. #1 \;\middle|\; #2 \right.}}
\newcommand{\Pty}[1]{\ensuremath{\mathrm{Pr}\SB{{#1}}}}
\newcommand{\Ptyc}[2]{\ensuremath{\Pty{\left. {#1} \;\middle|\; {#2} \right.}}}
\newcommand{\Cost}[3]{\ensuremath{\cost_{#1}\RB{{#2}, {#3}}}}
\newcommand{\cost}{\ensuremath{c}}
\newcommand{\ALPHSo}{\ensuremath{\alpha_{\mathrm{succ}}}}
\newcommand{\ALPHSz}{\ensuremath{\alpha_{\mathrm{fail}}}}
\newcommand{\ALPHCh}{\ensuremath{\alpha_{\Ch}}}
\newcommand{\Sig}[1]{\ensuremath{\sigma_{#1}}}%
\newcommand{\SigVal}[2]{\Sig{#1}\RB{#2}}
\title{Simple Algorithms for Stochastic Score Classification\\ with Small Approximation Ratios}
\author{Benedikt M.\ Plank\thanks{Berlin, Germany. Email: \texttt{b.plank@mail.de}. Supported in part by a research scholarship as well as the TopMath program of the Technical University of Munich, both part of the Elite Network of Bavaria, as well as by the Alexander von Humboldt Foundation with funds from the German Federal Ministry of Education and Research (BMBF).} \and Kevin Schewior\thanks{Department of Mathematics and Computer Science, University of Southern Denmark, Odense, Denmark. Email: \texttt{kevs@sdu.dk}. Supported in part by the Independent Research Fund Denmark, Natural Sciences, grant DFF-0135-00018B.}}
\begin{document}

\maketitle

\begin{abstract}
We revisit the Stochastic Score Classification (SSC) problem introduced by Gkenosis et al.\ (ESA~2018):
We are given $n$ tests. Each test $j$ can be conducted at cost $c_j$, and it succeeds independently with probability $p_j$.
Further, a partition of the (integer) interval $\{0,\dots,n\}$ into $B$ smaller intervals is known. The goal is to conduct tests so as to determine that interval from the partition in which the number of successful tests lies while minimizing the expected cost. Ghuge et al.\ (IPCO 2022) recently showed that a polynomial-time constant-factor approximation algorithm exists.

We show that interweaving the two strategies that order tests increasingly by their $c_j/p_j$ and $c_j/(1-p_j)$ ratios, respectively---as already proposed by Gkensosis et al.\ for a special case---yields a small approximation ratio. We also show that the approximation ratio can be slightly decreased from $6$ to $3+2\sqrt{2}\approx 5.828$ by adding in a third strategy that simply orders tests increasingly by their costs. The similar analyses for both algorithms are nontrivial but arguably clean. Finally, we complement the implied upper bound of $3+2\sqrt{2}$ on the adaptivity gap with a lower bound of $3/2$. Since the lower-bound instance is a so-called unit-cost \KofN{} instance, we settle the adaptivity gap in this case.

\end{abstract}

\section{Introduction}

The \emph{Stochastic Score Classification} (SSC) problem was introduced by Gkenosis et al.~\cite{GkenosisGHK:18} to model situations in which tests can be made in order to evaluate a certain score, e.g., the risk score of a patient. Here, the set of tests is $N:=\{1,\dots,n\}$. Any test $j\in N$ can be conducted at most once at known cost $c_j$, and it \emph{succeeds} independently with known probability $p_j\in(0,1)$ (otherwise it \emph{fails}). The different score classes are given through a partition of the integer interval $\{0,1,\dots,n\}$ into smaller integer intervals $\{t_1,\dots,t_2-1\},\{t_2,\dots,t_3-1\},\dots,\{t_B,\dots,t_{B+1}-1\}$ where $0=t_1<t_2<\dots<t_{B}<t_{B+1}=n+1$ are integers. Denoting the random test outcomes as $x\in\{0,1\}^N$, the score $f(x)$ is the value $i$ such that $\lVert x\rVert_1$, the number of successes in $x$, is contained in $\{t_i,\dots,t_{i+1}-1\}$. The goal is to efficiently compute a strategy that conducts tests so as to determine $f(x)$ at (approximately) minimum expected cost. We give a sample instance along with an optimal strategy in~\Cref{fig:example}.

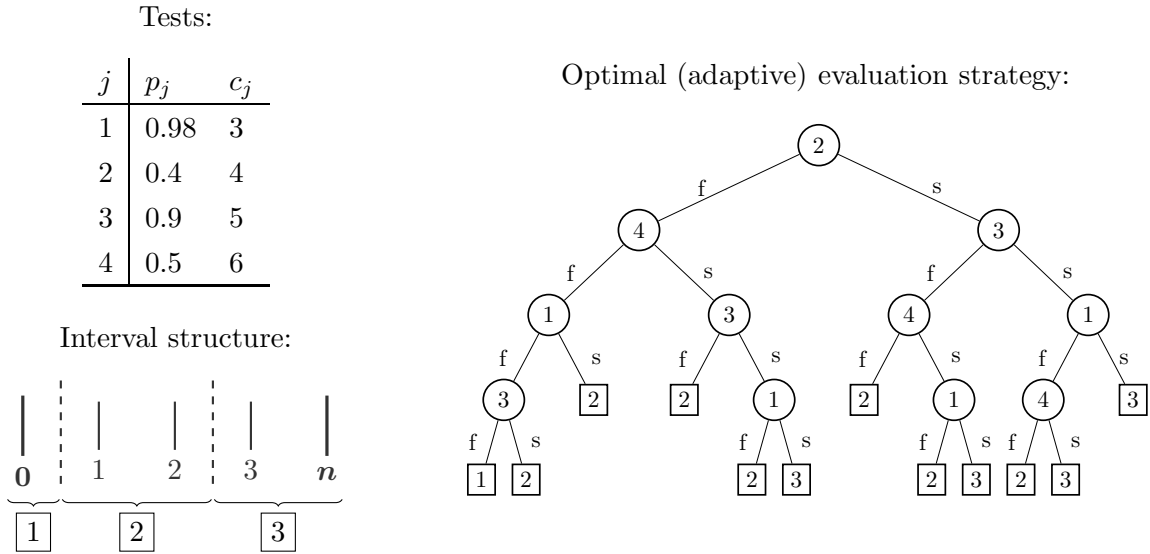
\begin{figure}


\begin{minipage}{.3\textwidth}
    \centering
    Tests:\\
    \vspace{.4cm}
    \begin{tabular}{l | l l}
        $j$ & $p_j$ & $c_j$ \\
        \hline
        1 & 0.98 & 3 \\
        2 &  0.4 & 4 \\
        3 & 0.9 & 5 \\
        4 & 0.5 & 6 \\
    \hline
    \end{tabular}

    \vspace{.5cm}
    
    \centering
    
    Interval structure:\\
    \vspace{.4cm}
    \begin{tikzpicture}[yscale=.4]
        \tikzset{bar/.style={thick, color=black!80}}
        \tikzset{outerbar/.style={bar, very thick, node font=\boldmath}}
        \tikzset{conint/.style={pattern=north east lines}}
        
        
        \draw[outerbar] (0, 1) -- (0, -1) node[below]{$0$};
        \draw[outerbar] (4, 1) -- (4, -1) node[below]{$n$};
        \draw[bar] (1, .8) -- (1, -.8) node[below]{$1$};
        \draw[bar] (2, .8) -- (2, -.8) node[below]{$2$};
        \draw[bar] (3, .8) -- (3, -.8) node[below]{$3$};
        \draw[bar,dashed] (.5, 1.5) -- (.5, -2);
        \draw[bar,dashed] (2.5, 1.5) -- (2.5, -2);
        \begin{scope}[yshift = -2cm]
            \draw[decoration={brace,mirror,raise=5pt},decorate] (-.2,0) -- node[rectangle,draw=black,below=10pt] {1} (.48,0);
            \draw[decoration={brace,mirror,raise=5pt},decorate] (.52,0) -- node[rectangle,draw=black,below=10pt] {2} (2.48,0);
            \draw[decoration={brace,mirror,raise=5pt},decorate] (2.52,0) -- node[rectangle,draw=black,below=10pt] {3} (4.2,0);
        \end{scope}
    \end{tikzpicture}
\end{minipage}%
\begin{minipage}{.69\textwidth}
    \centering
    Optimal (adaptive) evaluation strategy:\\
    \vspace{.4cm}
    \scalebox{0.8}{
        \begin{forest}
            for tree={circle,draw, l sep=20pt, thick}
            [2 
                [4,edge label={node[midway,left] {f$\,\,\,\,$}} 
                    [1,edge label={node[midway,left] {f$\,\,\,$}} 
                        [3,edge label={node[midway,left] {f$\,\,\,$}} 
                            [1,edge label={node[midway,left] {f$\,\,$}}, rectangle ]
                            [2,edge label={node[midway,right] {$\,\,$s}}, rectangle]
                        ]
                        [2, rectangle, edge label={node[midway,right] {$\,\,\,$s}}
                            [x, white, no edge, rectangle]
                            [x, white, no edge, rectangle]
                        ]
                    ] 
                    [3,edge label={node[midway,right] {$\,\,\,$s}} 
                        [2, rectangle, edge label={node[midway,left] {f$\,\,\,$}} 
                            [x, white, no edge, rectangle]
                            [x, white, no edge, rectangle]
                        ]
                        [1,edge label={node[midway,right] {$\,\,\,$s}} 
                            [2, rectangle, edge label={node[midway,left] {f$\,\,$}}, rectangle]
                            [3, rectangle,edge label={node[midway,right] {$\,\,$s}}, rectangle]
                        ]
                    ]
                ]
                [3,edge label={node[midway,right] {$\,\,\,\,$s}} 
                    [4,edge label={node[midway,left] {f$\,\,\,$}} 
                        [2, rectangle, edge label={node[midway,left] {f$\,\,\,$}}, rectangle
                            [x, white, no edge, rectangle]
                            [x, white, no edge, rectangle]
                        ]
                        [1,edge label={node[midway,right] {$\,\,\,$s}} 
                            [2, rectangle, edge label={node[midway,left] {f$\,\,$}}, rectangle]
                            [3, rectangle,edge label={node[midway,right] {$\,\,$s}}, rectangle]
                        ]
                    ] 
                    [1,edge label={node[midway,right] {$\,\,\,$s}} 
                        [4,edge label={node[midway,left] {f$\,\,\,$}} 
                            [2, rectangle, edge label={node[midway,left] {f$\,\,$}}, rectangle]
                            [3, rectangle,edge label={node[midway,right] {$\,\,$s}}, rectangle]
                        ]
                        [3, rectangle, edge label={node[midway,right] {s$\,\,\,$}}, rectangle
                            [x, white, no edge, rectangle]
                            [x, white, no edge, rectangle]
                        ]
                    ] 
                ] 
            ]
        \end{forest}
    }

\end{minipage}

\caption{Sample instance with four tests and three intervals ($t_1 = 0, t_2 = 1, t_3 = 3, t_4 = 5$). The tree depicts the unique optimal adaptive evaluation strategy. A path to the left denotes \textbf{f}ailure, to the right \textbf{s}uccess of the test in the parent node. A circled node specifies the next test to be conducted; a boxed node corresponds to a determined score. Observe that the first test is not the cheapest; for some paths, the cheapest test is performed last.}
\label{fig:example}
\end{figure}

Gkenosis et al.\ provide two algorithms with approximation guarantees $O(\log n)$ and $O(B)$, respectively, as well as some constant-factor approximation algorithms and exact algorithms for special cases~\cite{GkenosisGHK:18,GkenosisGHK22,GrammelHKL:22}. A notable special case for which an exact adaptive algorithm is known are so called \KofN{} functions, the special case when $B=2$~\cite{BenDov81,SalloumB84,Salloum:79} (for a review of the fragmented literature leading to these results, we refer to Salloum and Breuer~\cite{SalloumB:97}).
In work concurrent to the present one, Ghuge et al.~\cite{GhugeGN:21} give a constant-factor approximation, however, without a special focus on simplicity and the achieved constant. We give a deeper review of the literature later on.

In general, strategies can be complicated exponential-size objects since any test to be conducted could depend on the outcomes of all previous tests; for computational issues arising from this, we refer to Ünlüyurt's survey~\cite[Section 4.1]{Unluyurt04}. Nevertheless, a class of simple and in practice desirable strategies are non-adaptive strategies, which conduct tests in a fixed order and stop once the function value of $f$ is uniquely determined. The adaptivity gap measures the loss in the approximation factor one incurs when restricting to non-adaptive strategies. Since the recent work of Ghuge et al.~\cite{GhugeGN:21} it is known that the adaptivity gap for Stochastic Score Classification is constant.

In this work, we give a simple non-adaptive approximation algorithm for SSC that achieves the relatively small guarantee of $3+2\sqrt{2}\approx 5.828$.
We also show that an even simpler algorithmic idea, which was previously studied~\cite{GkenosisGHK:18}, also yields an only slightly larger guarantee of $6$. The algorithm is also non-adaptive and conceivably the simplest constant-factor approximation algorithm.
Lastly, we lower bound the adaptivity gap by $\nicefrac32$, even for the unit-cost case, showing that the upper bound of Grammel et al.~\cite{GrammelHKL:22} for the unit-cost case is tight.

\paragraph*{Our Contribution.}

Let us first consider the following non-adaptive algorithm, which we call \TwoRR and which is a $2$-approximation for \KofN{} functions (see~\cite{GkenosisGHK:18} for the same result on a very similar algorithm; we later discuss the differences in more detail). Recall that \KofN{} functions can take two function values, namely 1 (less than $k$ successes) or 2 (at least $k$ successes). The algorithm is based on two non-adaptive (sub-)algorithms $\ALGSz$ and $\ALGSo$, which are optimal conditioned on the function value being 1 and 2, respectively \cite{BenDov81}. These algorithms can be interweaved by a round-robin scheme: In each step, conduct the next unconducted test of that sub-algorithm that has the smallest total cost among all sub-algorithms when conducting their respective next unconducted test. Then, conditioned on function value 1, the cost of both sub-algorithms can be charged (realization-wise) to the cost of $\ALGSz$ run on the instance by itself, the expected value of which is a lower bound on $\OPT$. A symmetric argument for the condition on function value~$2$ leads to the $2$-approximation.

The algorithms $\ALGSz$ and $\ALGSo$ are very simple: $\ALGSz$ simply conducts tests in increasing order of $c_j/(1-p_j)$~\cite{Butterworth:72}; symmetrically, $\ALGSo$ orders by $c_j/p_j$. In particular, note that neither $\ALGSz$ nor $\ALGSo$ depends on $k$.

Now consider the case of general SSC. One may be tempted to also use \TwoRR here with the following (flawed) analysis in mind: Conditioned on $\lVert x\rVert_1\in [t_i,t_{i+1})$, $\ALGSo$ would optimally verify that $\lVert x\rVert_1\geq t_i$, and $\ALGSz$ would optimally verify that $\lVert x\rVert_1<t_{i+1}$. One of the sub-algorithms may perform extra tests that are not needed, but its total cost could be charged to the cost of the other algorithm.

The fact that such an analysis does not work can be seen as follows. Consider $\ALGSo$.
This algorithm is only optimal for verifying that $\lVert x\rVert_1\geq t_1$ conditioned on that being indeed the case, not under the condition that $\lVert x\rVert_1\in [t_i,t_{i+1})$.
Indeed, under the latter condition, $\ALGSo$ may be far from optimal:
For the first test, let $p_1=1-\varepsilon$ and $c_1=1$; for all other tests $j\in\{2,\dots,n\}$, let $p_j=0.5$ and $c_j=0.5$.
Further let $t_2=1$, $t_3=2$, and $t_4=n+1$, and condition $\lVert x\rVert_1\in [t_2,t_3)$, that is, $\lVert x\rVert_1=1$.
As $p_1\rightarrow 1$ with $\varepsilon \rightarrow 0$, the conditional success probability of test $1$ converges to $1$, and that of all other tests to $0$.
Therefore, the fact that $\lVert x\rVert_1\geq 1$ can (almost) only be verified by conducting test $1$ at cost $1$; $\ALGSo$, however, performs tests $2,\dots,n$ first, paying cost $\Theta(n)$.

In fact, we will eventually show that \TwoRR \emph{is} a constant-factor approximation, but the correct analysis is rather intricate. We therefore present a slightly more complicated algorithm with a simpler analysis for an, in fact, slightly better approximation ratio first. 

First note that, when $i=1$ or $i=B$ and conditioned on $\lVert x\rVert_1\in [t_i,t_{i+1})$, $\ALGSz$ and $\ALGSo$ \emph{are} optimal for verifying that $\lVert x\rVert_1\geq t_i$ and $\lVert x\rVert_1<t_{i+1}$, respectively.
Also think of the outcome of a test as inducing a sub-instance with one test fewer.
Then the idea underlying our algorithm is: Also conduct \emph{cheap} tests to some degree, independent of their success probability, so as to, at preferably small cost, obtain a sub-instance (where the function to evaluate is given through $t'_1,\dots,t'_{B'+1}$) in which the number of successes is contained in the interval $[t'_1,t'_2)$ or $[t'_{B'},t'_{B'+1})$.
Indeed, it is not hard to see that the algorithm \ALGCh{} conducting tests in increasing order of costs obtains such a sub-instance by paying at most the total cost of the optimal strategy.

Of course, the algorithm is oblivious to the interval that $\lVert x\rVert_1$ is in and can therefore not identify the time at which the aforementioned sub-instance is obtained.
This issue can easily be addressed by applying the round-robin scheme to $\ALGSz$, $\ALGSo$, and the algorithm \ALGCh{}; we call the resulting algorithm $\TwRR$. The analysis then works again conditioned on $\lVert x\rVert_1\in [t_i,t_{i+1})$ and in two phases:
In the first phase, the sub-instance is obtained; and in the second phase the remaining interval boundary is verified.
Essentially, in the first phase we can charge the cost incurred by all sub-algorithms to \ALGCh{} (when run separately, not as part of \TwRR), which can be upper-bounded by $\OPT$.
For the second phase, suppose $\ALGSz$ optimally verifies the remaining interval boundary. Its cost (when run separately) can be upper-bounded by $\OPT$ (using the properties of a sub-instance), and the cost of all sub-algorithms can be charged to \ALGSz{}. By such an analysis, we can show that \TwRR is a $6$-approximation.
We also show that the approximation guarantee can be improved to $3+2\sqrt{2}\approx 5.828$ by a weighted version of the round-robin scheme.

The analysis of $\TwoRR$ is based on the fact that the tests that this algorithm conducts are somewhat similar to those that \TwRR conducts. More specifically, at any time, if neither \ALGSz{} nor \ALGSo{} has conducted some cheap test $j$, then at least one of them has only conducted tests whose cost are within a factor of $2$ of the cost of the cheap test.
Altogether, this allows us to prove an approximation guarantee of~$6$.

Since our algorithms are non-adaptive and we compare against a possibly adaptive optimum, our results imply an upper bound of $3+2\sqrt{2}$ on the adaptivity gap of Stochastic Score Classification.
To complement our upper bounds on the approximation ratio, we provide a family of instances that shows that the adaptivity gap cannot be smaller than $\nicefrac32$. Our instance happens to be a \KofN{} instance in which all tests have unit cost, which, together with the matching upper bound~\cite{GrammelHKL:22}, settles the adaptivity gap for this case to be $\nicefrac32$.

\paragraph*{Further Previous Work.} The Stochastic Score Classification problem is a specific Stochastic Function Evaluation (SFE) problem: Here, the function $f$ may instead be any known (succinctly represented) function mapping from $\{0,1\}^N$ to some value set, and the goal is again to determine its function value at minimum expected cost. For a (slightly outdated) in-depth review of the area, we refer to the survey by Ünlüyurt~\cite{Unluyurt04}.

We mention a few results on other special cases of the SFE problem closely related to the SSC problem. The aforementioned results on optimal adaptive strategies for \KofN{} functions~\cite{BenDov81,SalloumB84,Salloum:79} are generalized by Boros and Ünlüyurt~\cite{BorosU:99} to so-called double-regular functions (systems, in their terms), a specific Stochastic \emph{Boolean} Function Evaluation problem in which the variables fulfill certain dominance properties. Deshpande et al.~\cite{DeshpandeHK:16} give a 3-approximation for the weighted version of \KofN{} functions (linear-threshold formulas, in their terms) based on a so-called dual greedy algorithm for Stochastic Submodular Set Cover.

We discuss the known approximation algorithms for the general SSC problem in more detail. The $O(B)$-approximation algorithm~\cite{GkenosisGHK:18,GkenosisGHK22} is obtained by running the optimal adaptive strategy for the $t_2$-of-$n$, $t_3$-of-$n$, \dots,$t_B$-of-$n$ function, each of whose costs can be charged to the globally optimal strategy. The $O(\log n)$-approximation algorithms~\cite{GkenosisGHK:18,GkenosisGHK22}, on the other hand, employs techniques used by Deshpande et al.~\cite{DeshpandeHK:16} for Stochastic Submodular Set Cover.

Gkenosis et al.~\cite{GkenosisGHK:18} consider the unit-cost version of the problem and show a $4$-approximation algorithm. Up to performing the same test possibly twice (naturally, with identical outcome), their algorithm is the same as \TwoRR. In a recent paper, Grammel et al.~\cite{GrammelHKL:22} give a $2$-approximation algorithm.
In addition, the unanimous vote function has been considered, the special case where $B=3$, $t_2=1$, and $t_3=n$, i.e., $f(x)=1$ only if $\lVert x\rVert=0$, and $f(x)=3$ only if $\lVert x\rVert=n$. An optimal strategy is among $n$ strategies that can be efficiently compared~\cite{GkenosisGHK:18,GrammelHKL:22}.

Adaptivity gaps are also considered explicitly by Gkenosis et al.~\cite{GkenosisGHK:18,GkenosisGHK22,GrammelHKL:22}. They provide simple algorithms showing that the adaptivitiy gaps for the general case, \KofN{} functions, and unanimous vote functions are at most $2\cdot (B-1)$, $2$, and $2$, respectively. In the unit-cost case, these bounds can be improved to $2$, $\varphi=\nicefrac{(1+\sqrt{5})}2$, and $\nicefrac32$, respectively. Very recently, Hellerstein et al.~\cite{HellersteinKLW22} investigate adaptivity gaps for other SFE problems. To the best of our knowledge, lower bounds on the adaptivity gaps for SSC are not known.

It is an open question whether the SSC problem is NP-hard, and a polynomial-time algorithm is not even known for the unit-cost version of the problem. A Stochastic Boolean Function Evaluation problem that has the same status since about 50 years is the problem of evaluating stochastic read-once formulas (e.g.,~\cite{Unluyurt04,GreinerHJM06}). The weighted version of the SSC problem is clearly NP-hard because the deterministic version is already NP-hard due to the hardness of the knapsack problem.

\paragraph*{Note on Concurrent Work.} Concurrently to our work, Ghuge et al.~\cite{GhugeGN:21} have also announced an algorithm for SSC achieving a constant-factor approximation. For the $d$-dimensional weighted version of the problem, they present a polynomial-time $O(d^2 \log d)$-approximation algorithm.

In contrast to their work, our algorithm and analysis are arguably cleaner (also due to the fact that we consider a more restricted problem). Furthermore, we focus on obtaining a small approximation guarantee while Ghuge et al.\ only focus on showing that a constant-factor approximation exists. Their algorithm, when projected onto the (unweighted) SSC problem, can also be viewed as a (coarser form of) interweaving \ALGSz and \ALGSo.

In independent work, Liu~\cite{liu20226approximation} also showed a 6-approximation ratio for both 3R and 2R.

\paragraph*{Overview of the Paper.} In Section~\ref{sec:prelim}, we introduce needed concepts and results from the literature and introduce additional notation. In Section~\ref{sec:CFA}, we give the constant-factor approximation algorithms. In Section~\ref{sec:CFA:adap}, we give the lower-bound instance for the adaptivity gap. Finally, we state directions for future work in Section~\ref{sec:conclusion}.

\section{Preliminaries}
\label{sec:prelim}

We start by introducing some notation. An instance of SSC is given through probabilities $p_j$ and costs $c_j$ for all $j\in N$ as well as the interval boundaries $t_1, \ldots, t_{B+1}$. We overload the symbol for cost $c$ and denote by the random variable $\Cost{}{S}{\Inst}$ the overall cost of a strategy $S$ on instance \Inst{} (where the randomness is over the test outcomes).
Of particular interest is the strategy \OPT{}, an arbitrary but fixed optimal strategy for the instance at hand, i.e., $\Cost{}{\OPT}{\Inst}\leq\Cost{}{S}{\Inst}$ for any instance $I$ and all adaptive strategies $S$.
Slightly misusing notation, we sometimes use an algorithm and the strategy computed by it interchangeably.
In particular, we denote by $\Cost{}{\ALG}{\Inst}$ the overall cost of the strategy computed by $\ALG$ on instance $I$. 

We frequently use the three specific algorithms \ALGSz, \ALGSo, and \ALGCh{}, defined through permutations $\Sig{\Fail},\Sig{\Succ},\Sig{\Ch}: \{1,\dots,n\} \rightarrow N$. These permutations are chosen such that
$\nicefrac{c_{\SigVal{\Fail}{1}}}{(1-p_{\SigVal{\Fail}{1}})} \leq \nicefrac{c_{\SigVal{\Fail}{2}}}{(1-p_{\SigVal{\Fail}{2}})} \leq \ldots \leq \nicefrac{c_{\SigVal{\Fail}{n}}}{(1-p_{\SigVal{\Fail}{n}})}$, $\nicefrac{c_{\SigVal{\Succ}{1}}}{p_{\SigVal{\Succ}{1}}} \leq \nicefrac{c_{\SigVal{\Succ}{2}}}{p_{\SigVal{\Succ}{2}}} \leq \ldots \leq \nicefrac{c_{\SigVal{\Succ}{n}}}{p_{\SigVal{\Succ}{n}}}$, and
$c_{\SigVal{\Ch}{1}} \leq c_{\SigVal{\Ch}{2}} \leq \ldots \leq c_{\SigVal{\Ch}{n}}$.
Now \ALGSz{}, \ALGSo{}, and \ALGCh{} are defined to be the algorithms selecting tests by increasing values of $\Sig{\Fail}$, $\Sig{\Succ}$, and $\Sig{\Ch}$, respectively. 
The algorithms $\ALGSz$ and $\ALGSo$ have been analyzed in the literature under the condition that $f(x)=1$ and $f(x)=2$, respectively, when $B=2$.
\begin{lemma}[Ben-Dov~\cite{BenDov81}]
\label{lemm:smithproperty}
    Let \ALG be any adaptive strategy for an instance \Inst{} of SSC with $B = 2$. If $f(x)=1$ with nonzero probably, it holds that
    \begin{equation}
        \Exc{\Cost{}{\ALG}{\Inst}}{f\RB{x} = 1} \geq \Exc{\Cost{}{\ALGSz}{\Inst}}{f\RB{x} = 1}.
    \end{equation}
    Symmetrically, if $f(x)=2$ with nonzero probability, it holds that
    \begin{equation}
        \Exc{\Cost{}{\ALG}{\Inst}}{f\RB{x} = 2} \geq \Exc{\Cost{}{\ALGSo}{\Inst}}{f\RB{x} = 2}.
    \end{equation}
\end{lemma}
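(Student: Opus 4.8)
The plan is to prove both inequalities by an interchange argument, first reducing the comparison against an arbitrary adaptive strategy to a comparison among non-adaptive test orders. The two statements are symmetric under exchanging the roles of success and failure: this replaces every $p_j$ by $1-p_j$, swaps the two function values, and turns the threshold $t_2$ into $n+1-t_2$, thereby interchanging $\ALGSo$ and $\ALGSz$. It therefore suffices to treat the case conditioned on $f(x)=2$, where I will show that the order sorting tests by increasing $c_j/p_j$, namely $\ALGSo$, is optimal (the case $f(x)=1$ then gives $\ALGSz$ by symmetry). Observe that, conditioned on $f(x)=2$, the number of successes is at least $t_2$, so no strategy can ever terminate by accumulating failures; its only way to stop is to collect $t_2$ successes. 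Since $\Pty{f(x)=2}$ does not depend on the strategy, minimizing the conditional expected cost is equivalent to minimizing the unnormalized quantity $W(S):=\Ex{\Cost{}{S}{\Inst}\cdot\mathbf{1}_{f(x)=2}}$.

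First I would reduce to non-adaptive orders by induction on $n$. The root test $j$ of an (adaptive) strategy $S$ is conducted on every realization and contributes $c_j\,\Pty{f(x)=2}$ to $W$; conditioning on its outcome, which is independent of everything else, splits the instance into two sub-instances on the remaining $n-1$ tests---with threshold $t_2-1$ after a success and $t_2$ after a failure---so that $W(S)=c_j\,\Pty{f(x)=2}+p_j\,W(S')+(1-p_j)\,W(S'')$, with $W(S')$ and $W(S'')$ taken on the respective sub-instances. By the induction hypothesis, each of $S'$ and $S''$ may be replaced by the order sorting the remaining tests by increasing $c_j/p_j$ without increasing $W$; crucially this order does not depend on the threshold, so both branches conduct the remaining tests in the same order and the resulting strategy is non-adaptive. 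Hence minimizing $W$ over all adaptive strategies coincides with minimizing it over non-adaptive orders.

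It then remains to identify the best non-adaptive order by adjacent transpositions. Swapping two neighboring tests $a$ and $b$ occupying positions $i$ and $i+1$ leaves the set of the first $i+1$ tested elements unchanged and hence does not affect whether any other test is conducted, so only the contributions of $a$ and $b$ to $W$ can change. A short calculation shows that this change equals $\Pty{B}\,(c_a p_b - c_b p_a)$, where $B$ is the event that exactly $t_2-1$ successes occur among the first $i-1$ tests---the only event on which reordering $a$ and $b$ changes which of the two is conducted (if fewer successes have occurred, both are conducted regardless of the order, and if $t_2$ or more have occurred, neither is). Consequently $a$ should precede $b$ precisely when $c_a/p_a\le c_b/p_b$, so the fully sorted order $\ALGSo$ minimizes $W$; dividing by $\Pty{f(x)=2}$ yields the claimed inequality, and the symmetry from the first paragraph gives the $f(x)=1$ case with $\ALGSz$.

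The step I expect to be most delicate is the interplay between adaptivity and the dependence that conditioning on $f(x)=2$ introduces among the test outcomes. Passing to the unnormalized $W$ neutralizes this difficulty, since the conditioning becomes a single indicator that the recursion propagates automatically and the exchange computation only ever needs the independence of a freshly conducted test from the previously revealed outcomes. The second point requiring care is the assertion that the optimal continuations on the two branches agree; this relies exactly on the $c_j/p_j$-order being oblivious to the threshold, which is what makes the adaptive optimum collapse to a single non-adaptive order.
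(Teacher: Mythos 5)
You prove the \emph{corrected} version of the lemma, not the statement as printed, and this discrepancy needs to be flagged rather than passed over silently. Your argument establishes that $\ALGSo$ (the $c_j/p_j$ order) is optimal conditioned on $f(x)=2$ and $\ALGSz$ conditioned on $f(x)=1$; the printed lemma asserts the opposite pairing, and that pairing is false. Concretely, take $n=2$, $t_2=1$, $c_1=0.5$, $p_1=0.9$, $c_2=0.4$, $p_2=0.5$, so that $\ALGSo$ tests $1$ before $2$ while $\ALGSz$ tests $2$ before $1$. Conditioned on $f(x)=2$ (at least one success),
\begin{equation*}
    \Exc{\Cost{}{\ALGSo}{\Inst}}{f(x)=2}=\frac{0.495}{0.95} \;<\; \frac{0.605}{0.95}=\Exc{\Cost{}{\ALGSz}{\Inst}}{f(x)=2},
\end{equation*}
which contradicts the printed second inequality when one takes $\ALG=\ALGSo$. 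So the printed statement contains a typo (the two sub-algorithms, equivalently the two function values, are swapped); the version you proved is the intended one---it is what the introduction asserts (``optimal conditioned on the function value being 1 and 2, respectively'') and what the proof of Lemma~\ref{lemm:RR:Phase2} actually invokes, namely that $\ALG_{s'}$ with $s'=\Fail$ is optimal when the score lies in the \emph{leftmost} interval.

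As for the argument itself: the paper contains no proof of this lemma at all---it attributes the statement to Ben-Dov and defers rigor to Boros and \"Unl\"uyurt, who prove a more general result---so your self-contained derivation is necessarily a different route, and it checks out. Passing to the unnormalized objective $W(S)=\Ex{\Cost{}{S}{\Inst}\cdot\mathbf{1}_{f(x)=2}}$ correctly neutralizes the dependence among outcomes introduced by conditioning; the recursion $W(S)=c_j\Pty{f(x)=2}+p_j\,W(S')+(1-p_j)\,W(S'')$ is valid, including the degenerate branches where the residual threshold becomes $0$ or exceeds the number of remaining tests (there $W$ vanishes); and the adjacent-transposition computation $\Pty{B}\,(c_a p_b-c_b p_a)$ is correct, because on the event $f(x)=2$ no strategy can ever stop by accumulating failures, and on $B\wedge\{x_b=1\}$ the event $f(x)=2$ holds automatically, so the independence of the fresh test from the prefix is all that is needed. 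One point of presentation: your induction hypothesis must be the \emph{full} claim (the sorted order is optimal among adaptive strategies on $n-1$ tests, for every threshold), not merely ``adaptive reduces to non-adaptive''; otherwise the two branches could be replaced by different non-adaptive orders and the combined strategy would not be non-adaptive. Your wording indicates you intend exactly this, and since the exchange step is proved independently of the induction, the logic closes; just state the induction claim explicitly. The crux you isolate---that the $c_j/p_j$ order is oblivious to the threshold, which is what collapses the adaptive optimum to a single permutation---is indeed the mechanism that makes the cited result work.
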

This observation was first stated by Ben-Dov~\cite{BenDov81}, together with a proof sketch. For a rigorous proof of \Cref{lemm:smithproperty} we refer to Boros and Ünlüyurt~\cite{BorosU:99}, who show a more general statement.

The round-robin scheme that Gkenosis et al.\ use for their algorithms for SSC is a natural algorithm-design technique previously used in related settings \cite{AllenHKU:15, CharikarFGKRS:02, KaplanKM:05} to run different sub-algorithms given by $\ALG_1,\dots,\ALG_k$ ``simultaneously'', where time is identified with cost. In contrast to Gkenosis et al., we consider a weighted version in this paper (see also Charikar et al.~\cite{CharikarFGKRS:02}), where $\alpha_1,\dots,\alpha_k>0$ are real-valued \emph{weights}, and our version explicitly skips any test that would be conducted for a second time.

Given $\ALG_1,\dots,\ALG_k$ and $\alpha_1,\dots,\alpha_k$, the round-robin scheme initializes $C_h$ to $0$ for all $h\in\{1,\dots,k\}$. As long as the function value is not determined, in each iteration, it defines $\delta_h$ to be the cost of the next test that $\ALG_h$ would perform and that was previously not performed by the round-robin scheme. Then the scheme chooses $h^\star\in\{1,\dots,k\}$ such that $\nicefrac{1}{\alpha_{h^\star}}\cdot(C_{h^\star}+\delta_{h^\star})$ is minimized, lets $\ALG_{h^\star}$ perform that test of cost $\delta_{h^\star}$, and updates $C_{h^\star}$ by adding $\delta_{h^\star}$ to it. In \Cref{alg:RR}, we summarize this procedure called \RdRn.

\begin{algorithm}[t]
\caption{$ \RdRn \RB{\ALG_1, \ldots, \ALG_k, \alpha_1, \ldots, \alpha_k}$}
\label{alg:RR}

$C_h \leftarrow 0 \,\,\, \forall \, h \in \CB{1, \ldots, k}$\;
\While{$f(x)$ not determined}{
    $\delta_h \leftarrow \textrm{ cost of next previously not performed test that }\ALG_h \textrm{ would perform } \forall \, h \in \{1, \ldots, k\}$\;\label{line:algRR:nextsel}
    choose $h^{\star} \in \argmin_{h \in \CB{1, \ldots, k}} \nicefrac{1}{\alpha_h}\cdot \RB{C_h + \delta_h}$\;
    perform the next test of $\ALG_{h^{\star}}$ that has previously not been performed\;
    $C_{h^\star} \leftarrow C_{h^\star} + \delta_{h^\star}$\;
}
\end{algorithm}

The difference between the version of the round-robin scheme that \GGHK{}\ consider and ours is that their version does not distinguish between tests that have been performed by the scheme already and tests that have not been performed. Therefore, their version may pick the same test in two different iterations of the main loop, and it would then also account for the cost of the test twice. Note that this may also change the order in which subsequent tests are performed. 

As a side note, we observe that the factor by which the costs of the two versions vary can be bounded by a function of $k,\alpha_1,\dots,\alpha_k$. We argue for $\alpha_1=\dots=\alpha_k$; the argument can easily be extended to the general case. In this case, the factor can be shown to be at most $k$, for both directions: The central observation is that, conditioned on some realization $x$, the total cost accumulated by a sub-algorithm in one of the versions can never be larger than the \emph{total} (actual) cost ${C}$, i.e., the cost across all sub-algorithms, in the other version. This is because, if the cost accumulated by a sub-algorithm $h^\star$ ever became larger than ${C}$, \emph{any} sub-algorithm $\ALG_h$ would exceed an accumulated cost of ${C}$ after adding the next respective test. Since the underlying order of tests for each sub-algorithm is the same in both versions, this means that $\ALG_h$ has already performed all tests that the corresponding sub-algorithm performed in the other version. This is a contradiction to the fact that the round-robin scheme has not yet terminated.

In the analysis of the second phase of our algorithms, using our version of the round-robin scheme allows us to restrict our attention to the sub-instance that remains in the beginning of the second phase. On the other hand, our analysis becomes arguably more difficult due to the fact that, in contrast to \GGHK{}'s version, we cannot use the inequality $\nicefrac{1}{\alpha_{h^\star}}\cdot C_{h^\star} \geq \nicefrac{1}{\alpha_{h}}\cdot C_{h}$ when the scheme has chosen $\ALG_{h^\star}$ to perform the previous test and $\ALG_h$ is any other algorithm used in the scheme: Indeed, consider $k=2$ and $\alpha_1=\alpha_2=1$. If both $\ALG_1$ and $\ALG_2$ would first test the tests $1$ and $2$ with $c_1=2$ and $c_2=1$ in this order, and all other tests have much larger cost, the inequality would not hold after the second iteration of our round-robin scheme.

We conclude this section by introducing some additional notation:
For an algorithm $\mathrm{RR} := \RdRn \RB{\ALG_1, \ldots, \ALG_k, \alpha_1, \ldots, \alpha_k}$ of the above type and step $\tau$ at which it has not terminated, let $\cost^{\mathrm{RR}}_\tau(\ALG_h,I)$ be the cost increment in iteration $\tau$ on instance $I$ due to $\ALG_h$, with value $c_{h^\star}$ if $h = h^\star$ and $0$ otherwise.

Observe that the set of SSC instances is closed under querying tests: 
Having queried a test $j$ in instance $\Inst$, the remainder of $\Inst$ is equivalent to an instance $\Inst'$ which has $j$ removed from the set of tests and its interval borders adapted depending on the outcome of $j$.
If $j$ was a failure, the limits do not have to be changed. If $j$ was a success, in $\Inst'$ we now need one success fewer to reach a score.
Interval limits that reach $0$ this way or exceed the number of remaining tests can be removed since the corresponding intervals cannot be reached any longer.
A score determined for $\Inst'$ can be used to reconstruct the score of $\Inst$.
We say that $\Inst'$ is a sub-instance of $\Inst$ if there exists a subset of tests with fixed outcomes such that $\Inst'$ is obtained after querying those tests.
We use the random variable $I^\ALG_{\tau}$ to specifically denote the sub-instance obtained from $\Inst$ after algorithm \ALG has queried $\tau$ steps.

\section{Constant-Factor Approximations for SSC}
\label{sec:CFA}

The goal of this section is to show that there is a simple polynomial-time non-adaptive $(3+2\sqrt{2})$-approximation algorithm for SSC and that the slightly larger approximation ratio of 6 can be achieved by an even simpler algorithm that also runs in polynomial time and is non-adaptive.

As we show, the algorithm $\TwoRR:=\RdRn(\ALGSz,\ALGSo,1,1)$ (as introduced by Gkenosis et al.~\cite{GkenosisGHK:18} using a different round-robin scheme) is the latter, even simpler, algorithm. As argued earlier, however, a naive analysis of this algorithm fails.
Therefore, we first give a cleaner two-phase analysis of the slightly more complicated algorithm $\TwRR:=\RdRn(\ALGSz,\ALGSo,\ALGCh,\ALPHSz,\ALPHSo,\ALPHCh)$ in Section~\ref{subsec:rr3}.
We later choose the weights $\ALPHSz$, $\ALPHSo$ and $\ALPHCh$ so as to minimize the approximation guarantee; note that, due to symmetry, it makes sense to choose $\ALPHSo=\ALPHSz$. Then, by relating \TwoRR to \TwRR, we give an analysis of \TwoRR in Section~\ref{subsec:rr2}.

\subsection{Weighted Three-Way Round Robin}
\label{subsec:rr3}

In this section we prove the following theorem.
\begin{theorem}
\label{thm:TwRR}
    The algorithm \TwRR{} with parameters $\ALPHSo=\ALPHSz=1$ and $\ALPHCh=\sqrt{2}$ is a $(3+2\sqrt{2})$-approximation algorithm for Stochastic Score Classification.
\end{theorem}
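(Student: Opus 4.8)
The plan is to fix a realization $x$, condition on the event $\lVert x\rVert_1\in[t_i,t_{i+1})$ for each interval index $i$, and split the execution of \TwRR{} into two phases whose costs are charged against \OPT{} separately. Phase~1 runs until the first step $\tau_1$ at which the remaining sub-instance $I^{\TwRR}_{\tau_1}$ has $\lVert x\rVert_1$ in its first or last interval, i.e.\ until one of the two relevant boundaries $t_i,t_{i+1}$ has been verified; phase~2 is the remainder. Writing $C_h$ for the cost the round-robin scheme attributes to sub-algorithm $h\in\{\ALGSz,\ALGSo,\ALGCh\}$ with weight $\alpha_h$, my target is the per-algorithm, realization-wise bound
\begin{equation*}
    C_h \;\le\; \frac{\alpha_h}{\ALPHCh}\,L_1 \;+\; \frac{\alpha_h}{\ALPHSz}\,L_2,
\end{equation*}
where $L_1$ is the standalone cost of \ALGCh{} to reach the phase-1 goal and $L_2$ the standalone cost of the sub-algorithm that optimally verifies the single remaining boundary on $I^{\TwRR}_{\tau_1}$. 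Summing over $h$ then yields total cost at most $\tfrac{\ALPHSz+\ALPHSo+\ALPHCh}{\ALPHCh}L_1+\tfrac{\ALPHSz+\ALPHSo+\ALPHCh}{\ALPHSz}L_2$, after which it remains to bound $\Ex{L_1}$ and $\Ex{L_2}$ by $\Ex{\Cost{}{\OPT}{\Inst}}$ and to optimize the weights (summing finally over the $i$ by total expectation).

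Next I would bound $L_1\le\Cost{}{\OPT}{\Inst}$ realization-wise by a counting argument. Phase~1 ends as soon as \ALGCh{} has observed $t_i$ successes or $n-t_{i+1}+1$ failures, so \ALGCh{} queries at most $t_i+(n-t_{i+1})$ of the globally cheapest tests. To determine the interval, however, \OPT{} must observe both at least $t_i$ successes and at least $n-t_{i+1}+1$ failures and hence queries strictly more tests; since \ALGCh{} uses the globally cheapest tests, its phase-1 cost is at most that of any equally large subset of \OPT's tests, and therefore at most $\Cost{}{\OPT}{\Inst}$. (For $i=1$ and $i=B$ the phase-1 goal holds at the outset, so $L_1=0$.)

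For phase~2, once it begins the sub-instance $I^{\TwRR}_{\tau_1}$ is a $B=2$ instance in which only the lower boundary $t_i$ or only the upper boundary $t_{i+1}$ remains; by symmetry (and $\ALPHSz=\ALPHSo$) assume the upper one, so that, conditioned on staying in interval $i$, verifying it is exactly the task for which \ALGSz{} is optimal by \Cref{lemm:smithproperty}. Applying the lemma to $I^{\TwRR}_{\tau_1}$ and comparing with the continuation of \OPT{} (which must also verify this boundary) yields $\Ex{L_2}\le\Ex{\Cost{}{\OPT}{\Inst}}$ after integrating over the phase-1 outcomes. The per-algorithm bound itself I would derive from the round-robin balance in two steps: at the end of phase~1, $C_h^{(1)}\le\tfrac{\alpha_h}{\ALPHCh}L_1$, since phase~1 ends no later than when \ALGCh{} has completed its goal tests of cost $L_1$; and at termination, $C_h\le\tfrac{\alpha_h}{\ALPHSz}\bigl(C_{\Fail}^{(1)}+L_2\bigr)$, since the scheme stops no later than when \ALGSz{} has additionally completed its $L_2$ worth of tests on the sub-instance. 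Chaining these and summing gives the displayed total, and with $\ALPHSz=\ALPHSo=1$ and $\ALPHCh=\alpha$ the resulting factor is $3+\tfrac2\alpha+\alpha$, minimized at $\alpha=\sqrt2$ with value $3+2\sqrt2$.

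The main obstacle is making those two balance inequalities rigorous for our skip-repeated-tests variant of \RdRn{}. As noted before \Cref{alg:RR}, the clean inequality $\tfrac1{\alpha_{h^\star}}C_{h^\star}\ge\tfrac1{\alpha_h}C_h$ for the most recently chosen $h^\star$ fails here, so the usual water-filling argument cannot be applied verbatim; instead one must argue through the selection rule $\tfrac1{\alpha_{h^\star}}(C_{h^\star}+\delta_{h^\star})\le\tfrac1{\alpha_h}(C_h+\delta_h)$ together with the facts that skipping only lowers accumulated costs and that goal tests completed by other sub-algorithms can only make the scheme terminate sooner. The skip feature is in fact exactly what lets phase~2 be analyzed entirely on the sub-instance $I^{\TwRR}_{\tau_1}$: because \ALGSz's order is the global $c_j/(1-p_j)$ order restricted to the remaining tests, its phase-2 tests coincide with those it would perform when run on $I^{\TwRR}_{\tau_1}$ alone, so the carried-over phase-1 offsets $C_h^{(1)}$ enter the final bound only through the already-controlled term $\tfrac{\alpha_h}{\ALPHCh}L_1$ and otherwise only reduce phase-2 cost.
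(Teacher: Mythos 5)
Your proposal is correct and follows essentially the same route as the paper's own proof: the identical two-phase decomposition conditioned on $f(x)=i$, a phase-1 charge of each weighted sub-algorithm's cost against the cheapest tests that any strategy must perform (the paper's \Cref{lemm:RR:Phase1}), a phase-2 charge via \Cref{lemm:smithproperty} applied to the sub-instance $I^{\TwRR}_{\tau_1}$ combined with the selection-rule (``next unconducted goal test'') argument needed for the skip-repeats variant of \RdRn{} (the paper's \Cref{lemm:RR:Phase2}), and the same optimization of $(\ALPHSz+\ALPHSo+\ALPHCh)\left(\frac{1}{\ALPHCh}+\frac{1}{\ALPHSz}\right)=3+\ALPHCh+\frac{2}{\ALPHCh}$ at $\ALPHCh=\sqrt2$. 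The differences are only presentational: you carry realization-wise quantities $L_1,L_2$ and defer the conditional-expectation bookkeeping that the paper makes explicit by conditioning on $I^{\TwRR}_{\tau_1}=I'\wedge f(x)=i$ and on the events $\PhaseTwoActive{h}$ and $\PhaseTwoInactive{h}$.
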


We show a slightly stronger statement, namely that, conditional on $f(x)=i$ for any $i\in\{1,\dots,B-1\}$, the expected cost of \TwRR is at most a factor of $3 + 2 \sqrt{2}$ larger than that of the optimal strategy.
The analysis works in two phases: Phase 1 ends with step $\tau_1$, the first step after which \TwRR has found at least $t_i$ successes or at least $n + 1 - t_{i + 1}$ failures.
In other words, $\tau_1$ is the first step after which the score lies in the leftmost or rightmost interval of the remaining restricted instance $I^\TwRR_{\tau_1}$.
This is depicted in \Cref{fig:phase-viz}.
Note that $\tau_1$ is a random variable of value possibly $0$ but at least $\min\{t_i,n + 1 - t_{i + 1}\}$.
Phase 2 starts at step $\tau_1+1$ and ends when the function value is determined, at step $\tau_2$.

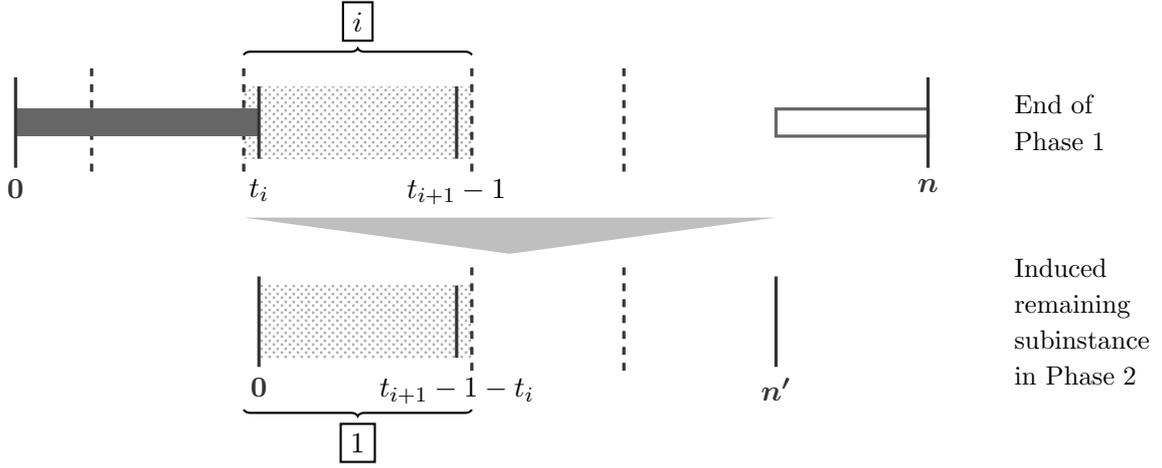
\begin{figure}
    \centering
    \begin{tikzpicture}[yscale=.6]
        \tikzset{bar/.style={very thick, color=black!80}}
        \tikzset{outerbar/.style={bar, very thick, node font=\boldmath}}
        \tikzset{conint/.style={pattern=north east lines}}
    
        \fill[pattern=crosshatch dots, pattern color=gray!60] (3, .8) rectangle (6, -.8);

        \draw[bar,dashed] (1, 1.2) -- (1, -1.2); 
        \draw[bar,dashed] (3, 1.2) -- (3, -1.2);
        \draw[bar,dashed,white] (3.2, 1.3) -- (3.2, -1.0) node[below,black]{$t_{i}$};
        \draw[bar,dashed,white] (5.8, 1.3) -- (5.8, -1.0) node[below,black]{$t_{i+1}-1$};
        \draw[bar,dashed] (6, 1.2) -- (6, -1.2);
        \draw[bar] (5.8, .8) -- (5.8, -.8);
        \draw[bar,dashed] (8, 1.2) -- (8, -1.2);
        
        \fill[color=black!60] (0, .3) rectangle (3.2, -.3);
        \filldraw[color=black!60, very thick, fill=white] (10, .3) rectangle (12, -.3);
        \draw[bar] (3.2, .8) -- (3.2, -.8);
        \draw[outerbar] (0, 1) -- (0, -1) node[below]{$0$};
        \draw[outerbar] (12, 1) -- (12, -1) node[below]{$n$};
        
        \draw (13, 0) node[align=left,anchor=west]{\small End of\\[-.2\baselineskip]\small Phase~$1$};
        
        \begin{scope}[shift={(0,-2)}]
            \draw[thick,decoration={brace,raise=5pt},decorate] (3,3.2) -- node[rectangle,draw=black,above=10pt] {$i$} (6,3.2);
            \draw [very thick, fill=gray!50, draw=none] (3,-.1) -- (6.5,-.9) -- (10,-.1);
        \end{scope}
        
        \begin{scope}[shift={(0,-4.4)}]
            \fill[pattern=crosshatch dots, pattern color=gray!60] (3.2, .8) rectangle (6, -.8);
    
            \draw[outerbar] (3.2, 1) -- (3.2, -1) node[below]{$0$};
            \draw[outerbar] (10, 1) -- (10, -1) node[below]{$n'$};
            \draw[bar,white] (5.8,1.3) -- (5.8,-1.0) node[below,black]{$t_{i+1}-1-t_i$};
            \draw[bar,dashed] (6,1.2) -- (6,-1.2);
            \draw[bar] (5.8,.8) -- (5.8,-.8);
            \draw[bar,dashed] (8, 1.2) -- (8, -1.2);
            \draw[thick,decoration={brace,mirror,raise=5pt},decorate] (3,-1.65) -- node[rectangle,draw=black,below=10pt] {$1$} (6,-1.65);
            
            \draw (13, 0) node[align=left,anchor=west]{\small Induced\\[-.2\baselineskip]\small remaining\\[-.2\baselineskip]\small subinstance\\[-.2\baselineskip]\small in Phase~$2$};
        \end{scope}
    
    \end{tikzpicture}
    \caption{Visualization of the end of phase~$1$ and the resulting subinstance.
    The dotted area shows the interval~$i$ containing the score according to the condition in \Cref{lemm:RR:Phase2}.
    The dark bar displays the successful tests performed so far, the white bar the failed tests.
    The number $n'$ of tests in the subinstance is obtained by subtracting the number of tests performed so far from $n$.}
    \label{fig:phase-viz}
\end{figure}

Intuitively, a good algorithm for phase 1 is $\ALGCh$, and the ``right'' algorithm for phase~2 is $\ALG_s$ where $s$ is a random variable whose value is ``fail'' if the score lies in the leftmost interval of $I^\TwRR_{\tau_1}$ and ``succ'' otherwise (equivalently, assuming phase~1 is nonempty, if the last test of phase~1 was a success or failure, respectively).

The following lemma bounds the cost caused by $\ALGSz$, $\ALGSo$, and $\ALGCh$ in phase~$1$. Note that the inequality in the statement always holds, not only in expectation conditioned on $f(x)=i$.

\begin{lemma}\label{lemm:RR:Phase1}
    For any $h \in \CB{\Fail, \Succ, \Ch}$, it holds that $$\frac{1}{\alpha_h}\cdot\sum_{\tau=1}^{\tau_1} \cost^\TwRR_\tau(\ALG_h,I)\leq \frac{1}{\ALPHCh}\cdot\Cost{}{\OPT}{\Inst}.$$
\end{lemma}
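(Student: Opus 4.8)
The plan is to argue realization-wise on the event $\{f(x)=i\}$ and to track, for each sub-algorithm $\ALG_h$, the accrued cost $C_h^{(\tau)} := \sum_{\sigma=1}^{\tau}\cost^\TwRR_\sigma(\ALG_h,I)$ within \TwRR{} through step $\tau$; the left-hand side of the claim is then $\frac{1}{\alpha_h}C_h^{(\tau_1)}$. If $\ALG_h$ is never selected during phase~$1$, then $C_h^{(\tau_1)}=0$ and the inequality is immediate, so I assume $\ALG_h$ is selected and let $\tau'\le\tau_1$ be the \emph{last} step at which this happens. Since $C_h$ is unchanged between step $\tau'$ and $\tau_1$, we have $C_h^{(\tau_1)}=C_h^{(\tau'-1)}+\delta_h^{(\tau')}$, where $\delta_h^{(\tau')}$ is the cost of the test selected at step $\tau'$. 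Applying the greedy selection rule of \RdRn{} at step $\tau'$ (with $h^\star=h$, comparing against $\Ch$) gives $\frac{1}{\alpha_h}(C_h^{(\tau'-1)}+\delta_h^{(\tau')})\le\frac{1}{\ALPHCh}(C_\Ch^{(\tau'-1)}+\delta_\Ch^{(\tau')})$. Hence it suffices to show $C_\Ch^{(\tau'-1)}+\delta_\Ch^{(\tau')}\le\Cost{}{\OPT}{\Inst}$.

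Next I would bound $C_\Ch^{(\tau'-1)}+\delta_\Ch^{(\tau')}$ by the total cost $\sum_{s=1}^{\tau'}c_{\SigVal{\Ch}{s}}$ of the $\tau'$ cheapest tests. The key structural observation is that \ALGCh{} processes tests in nondecreasing cost order while skipping already-performed ones, so at step $\tau'$ its next test $\SigVal{\Ch}{r}$ (of cost $\delta_\Ch^{(\tau')}$) is the cheapest test not yet performed by \TwRR{}; this forces every test cheaper than it to have been performed already. In particular, every test that \ALGCh{} itself performed before step $\tau'$ lies among $\SigVal{\Ch}{1},\dots,\SigVal{\Ch}{r-1}$, so the tests counted in $C_\Ch^{(\tau'-1)}$ form a subset of $\{\SigVal{\Ch}{1},\dots,\SigVal{\Ch}{r-1}\}$ and thus $C_\Ch^{(\tau'-1)}+\delta_\Ch^{(\tau')}\le\sum_{s=1}^{r}c_{\SigVal{\Ch}{s}}$. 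Finally, $\SigVal{\Ch}{1},\dots,\SigVal{\Ch}{r-1}$ have all been performed by step $\tau'-1$, so $r-1\le\tau'-1$, and the bound is at most $\sum_{s=1}^{\tau'}c_{\SigVal{\Ch}{s}}$.

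It remains to bound the cost of the $\tau'$ cheapest tests by $\Cost{}{\OPT}{\Inst}$ via a counting argument. Because phase~$1$ has not ended before step $\tau_1$, at step $\tau_1-1$ fewer than $t_i$ tests have succeeded and fewer than $n+1-t_{i+1}$ have failed; summing gives $\tau_1\le t_i+n-t_{i+1}$, and hence $\tau'\le\tau_1<t_i+(n+1-t_{i+1})$. On the other hand, on any realization with $f(x)=i$, to certify $\lVert x\rVert_1\in[t_i,t_{i+1})$ the optimal strategy must observe at least $t_i$ successes and at least $n+1-t_{i+1}$ failures, so it performs at least $t_i+(n+1-t_{i+1})>\tau'$ tests. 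Since performing any $m$ tests costs at least $\sum_{s=1}^{m}c_{\SigVal{\Ch}{s}}$, we obtain $\Cost{}{\OPT}{\Inst}\ge\sum_{s=1}^{\tau'}c_{\SigVal{\Ch}{s}}\ge C_\Ch^{(\tau'-1)}+\delta_\Ch^{(\tau')}$, which closes the chain.

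I expect the main obstacle to be the middle step: correctly accounting for the fact that, inside the round-robin scheme, \ALGCh{} performs only a \emph{subset} of the cheap tests (since \ALGSz{} and \ALGSo{} may have taken some) and that $\delta_\Ch^{(\tau')}$ refers to the cheapest test not yet performed by the \emph{whole} scheme rather than by \ALGCh{} alone. The monotonicity of the cheapest-available cost together with the ``subset of a cost-prefix'' argument are exactly what make $C_\Ch^{(\tau'-1)}+\delta_\Ch^{(\tau')}\le\sum_{s=1}^{\tau'}c_{\SigVal{\Ch}{s}}$ go through. A secondary point to state carefully is that the whole chain is established pointwise on the event $\{f(x)=i\}$ (this is where the optimal-strategy lower bound on the number of tests is invoked), which is precisely the deterministic inequality the statement claims and which is then used in expectation in the proof of \Cref{thm:TwRR}.
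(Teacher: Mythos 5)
Your proof is correct and takes essentially the same route as the paper's: you identify the last step $\tau'$ (the paper's $\hat{\tau}_1$) at which $\ALG_h$ is selected, invoke the greedy selection rule to compare against $\ALGCh$, bound $\ALGCh$'s accrued cost plus its next test by a prefix of the cheapest tests using the subset-of-prefix structure of $\ALGCh$ within the scheme, and lower-bound $\Cost{}{\OPT}{\Inst}$ by the cost of the $t_i+(n+1-t_{i+1})$ cheapest tests via the same counting argument. The only cosmetic difference is that you route the prefix bound through the step index $\tau'$ (via $r\leq\tau'\leq\tau_1<t_i+(n+1-t_{i+1})$), whereas the paper bounds the index $j^\star$ of $\ALGCh$'s next test directly by $n_i:=t_i+(n+1-t_{i+1})$.
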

\begin{proof}
    First let $i:=f(x)$ and define $n_i:=t_i+(n + 1 - t_{i + 1})$, the minimum number of tests that need to be performed by any algorithm if the score lies in the interval $[t_i, t_{i+1})$. Hence the cost of the $n_i$ cheapest tests is a lower bound on $\Cost{}{\OPT}{\Inst}$:
    \begin{equation}\label{eq:phase1-opt}
        \Cost{}{\OPT}{\Inst}\geq \sum_{j=1}^{n_i} c_{\SigVal{\Ch}{j}}.
    \end{equation}
    
    On the other hand, consider some $h \in \CB{\Fail, \Succ, \Ch}$. If $\ALG_h$ was not chosen by $\TwRR$ to conduct any test in phase 1 (e.g., because phase 1 is empty), the lemma is trivially true. Otherwise, let $\hat{\tau}_1 \leq \tau_1$ be the last step in which $\TwRR$ chose $\ALG_h$ to conduct a test before the end of phase $1$.
        
    Now let $j^\star$ be the minimum value such that $\SigVal{\Ch}{j^\star}$ has not been conducted up to and including step $\hat{\tau}_1-1$. Note that it must hold that $j^\star\leq n_i$: Otherwise, at least $n_i$ tests would have already been conducted at that time, implying that already $t_i$ successes or $n+1-t_{i+1}$ failures would have been found and, consequently, that phase $1$ would have already ended. So, using the order in which $\ALGCh$ conducts tests, we have
	    \begin{equation}\label{eq:phase1-aux}
        \sum_{\tau=1}^{\hat{\tau}_1-1} \cost^\TwRR_\tau(\ALGCh,I)\leq \sum_{j=1}^{n_i}c_{\SigVal{\Ch}{j}}-c_{\SigVal{\Ch}{j^\star}}.
    \end{equation}
    For any $h \in \CB{\Fail, \Succ, \Ch}$, the way that \TwRR chooses the test to be conducted in step $\hat{\tau}_1$ now implies that
    \begin{equation}\label{eq:phase1-alg}
        \frac{1}{\alpha_h}\cdot\sum_{\tau=1}^{\hat{\tau}_1} \cost^\TwRR_\tau(\ALG_h,I)\leq\frac{1}{\ALPHCh}\cdot\left(\sum_{\tau=1}^{\hat{\tau}_1-1} \cost^\TwRR_\tau(\ALGCh,I)+c_{\SigVal{\Ch}{j^\star}}\right)\leq\frac{1}{\ALPHCh}\cdot\sum_{j=1}^{n_i}c_{\SigVal{\Ch}{j}},
    \end{equation}    
    where the second step follows from Inequality~\eqref{eq:phase1-aux}. The lemma now follows by combining Inequalities~\eqref{eq:phase1-opt} and~\eqref{eq:phase1-alg} together with the fact that $\ALG_h$ does not conduct tests in steps $\hat{\tau}_1+1$ to $\tau_1$.
\end{proof}

The next lemma is concerned with phase $2$. To be able to use it both for the analysis of \TwRR and that of \TwoRR, we prove a slightly more general statement, for a generic round-robin algorithm $$\GenRR:=\RdRn(\ALGSz,\ALGSo,\ALG_1,\dots,\ALG_{k},\ALPHSz,\ALPHSo,\alpha_1,\dots,\alpha_k),$$ where $k\geq 0$, $\ALG_1,\dots,\ALG_k$ are arbitrary algorithms, and $\alpha_1,\dots,\alpha_k>0$ are arbitrary weights.
We need definitions analogous to those of $s$, $\tau_1$, and $\tau_2$: Setting again $i=f(x)$, we let $\tau_1'$ be the first step after which $\GenRR$ has found at least $t_i$ successes or at least $n+1-t_{i+1}$ failures. Phase $1$ (of the execution of \GenRR) lasts from step $1$ up to and including step $\tau_1'$, and phase $2$ from step $\tau_1'+1$ up to and including step $\tau_2'$, the first step after which the value of $f$ is determined.
Finally, $s'$ is a random variable whose value is ``fail'' if the score lies in the leftmost interval of $I^\GenRR_{\tau_1'}$ and ``succ'' otherwise.

Consider $h\in\CB{\Fail, \Succ, 1,\dots,k}$ and denote the event that $\ALG_h$ actually performs a test in phase 2 by $\PhaseTwoActive{h}'$ (and the corresponding event for \TwRR, and \TwoRR later, by $\PhaseTwoActive{h}$). To complete the full analysis, we do not require an additional statement about $\ALG_h$ in phase 2 if $E_h$ does not occur. Therefore, the next lemma only refers to the part of the distribution corresponding to $E_h$. Informally, the statement is the following. Additionally conditioned on $f(x)=i$ and on $I^\GenRR_{\tau_1'}$ being any relevant sub-instance $I'$ of $I$, and up to the scaling by $\ALPHSz,\ALPHSo,\alpha_1,\dots,\alpha_k$, the expected cost caused by $\ALG_h$ going beyond the cost of $\ALG_{s'}$ in phase $1$ is upper-bounded by the expected cost of the optimal strategy. The proof idea is to first relate the cost of $\OPT$ to that of $\ALG_{s'}$, and then to use a similar argument as in the second part of the proof of Lemma~\ref{lemm:RR:Phase1}.

\begin{lemma}
    \label{lemm:RR:Phase2}
    Consider any instance $I'$, any $i\in\{1,\dots,B-1\}$, and any $h \in \CB{\Fail, \Succ, 1,\dots,k}$ such that $I^\GenRR_{\tau_1'}=I'\wedge f(x)=i\wedge\PhaseTwoActive{h}'$ with nonzero probability. Then we have
    \begin{align*}\label{eq:phase2lemm3}
        &\Ptyc{\PhaseTwoActive{h}'}{I^\GenRR_{\tau_1'}=I'\wedge f(x)=i}\\
        \cdot\;&\Exc{\frac{1}{\alpha_{h}}\cdot\sum_{\tau=1}^{\tau_2'}\cost^{\GenRR}_\tau(\ALG_{h},I)-\frac{1}{\alpha_{s'}}\cdot\sum_{\tau=1}^{\tau_1'}\cost^{\GenRR}_\tau(\ALG_{s'},I)}{I^\GenRR_{\tau_1'}=I'\wedge f(x)=i\wedge\PhaseTwoActive{h}'}\\
        \leq \phantom{\cdot\;}&\Exc{\frac{1}{\alpha_{s'}}\cdot\Cost{}{\OPT}{\Inst}}{I^\GenRR_{\tau_1'}=I'\wedge f(x)=i}.
    \end{align*}
\end{lemma}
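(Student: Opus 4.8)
The plan is to establish a single realization-wise inequality that, after conditioning, reduces the claim to comparing the standalone cost of $\ALG_{s'}$ on the sub-instance $I'$ against \OPT{}. Throughout, write $C$ for the conditioning event $I^\GenRR_{\tau_1'}=I'\wedge f(x)=i$. Given $C$, both $s'$ and hence the weight $\alpha_{s'}$ are fixed, and the outcomes of all tests performed in phase~$1$ are determined, so the only remaining randomness is in the outcomes of the tests of $I'$, constrained so that the score of $I'$ lies in its leftmost (if $s'=\Fail$) or rightmost (if $s'=\Succ$) interval.

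First I would show that for every realization in $C\wedge\PhaseTwoActive{h}'$ the pointwise bound
\[
\frac{1}{\alpha_h}\sum_{\tau=1}^{\tau_2'}\cost^{\GenRR}_\tau(\ALG_h,I)-\frac{1}{\alpha_{s'}}\sum_{\tau=1}^{\tau_1'}\cost^{\GenRR}_\tau(\ALG_{s'},I)\le\frac{1}{\alpha_{s'}}\Cost{}{\ALG_{s'}}{I'}
\]
holds, mirroring the second half of the proof of \Cref{lemm:RR:Phase1}. Since $\PhaseTwoActive{h}'$ occurs, let $\hat\tau_2$ be the last step of phase~$2$ at which \GenRR{} selects $\ALG_h$; as this is also its last selection overall, the first scaled sum equals $\tfrac1{\alpha_h}(C_h+\delta_h)$ at step $\hat\tau_2$. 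The selection rule then yields $\tfrac1{\alpha_h}(C_h+\delta_h)\le\tfrac1{\alpha_{s'}}(C_{s'}+\delta_{s'})$, where $C_{s'}$ is $\ALG_{s'}$'s accumulated cost and $\delta_{s'}$ the cost of its next not-yet-performed test. Splitting $C_{s'}$ into its phase-$1$ and phase-$2$ parts, the phase-$1$ part is exactly the subtracted term and cancels, so it remains to bound the phase-$2$ cost of $\ALG_{s'}$ plus $\delta_{s'}$ by $\Cost{}{\ALG_{s'}}{I'}$.

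This last bound is where I expect the main difficulty, and it is precisely where our skip-repeats variant of the round-robin scheme is needed. Because all phase-$1$ tests are removed in $I'$ and the order $\Sig{s'}$ is instance-independent, the tests $\ALG_{s'}$ performs in phase~$2$, together with its next wanted test, form a prefix---with some positions skipped because another sub-algorithm already performed them---of the order in which $\ALG_{s'}$ run on $I'$ alone performs tests. The key observation is that the position $l'$ of $\ALG_{s'}$'s next wanted test cannot exceed the number $m^\star$ of tests at which $\ALG_{s'}$ run on $I'$ alone stops: if the first $m^\star$ tests of $\ALG_{s'}$'s order on $I'$ had all been performed, the value of $I'$ would already be determined and phase~$2$ would have ended, contradicting that \GenRR{} still performs a test at $\hat\tau_2$. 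Hence the phase-$2$ cost of $\ALG_{s'}$ plus $\delta_{s'}$ is at most $\sum_{l=1}^{m^\star}c_{\SigVal{s'}{l}}=\Cost{}{\ALG_{s'}}{I'}$, giving the pointwise bound.

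Finally I would take the conditional expectation of the pointwise bound given $C\wedge\PhaseTwoActive{h}'$ and multiply by $\Ptyc{\PhaseTwoActive{h}'}{C}$; since $\Cost{}{\ALG_{s'}}{I'}\ge 0$, dropping the event $\PhaseTwoActive{h}'$ shows the right-hand side is at most $\Exc{\tfrac1{\alpha_{s'}}\Cost{}{\ALG_{s'}}{I'}}{C}$. It then suffices to prove $\Exc{\Cost{}{\ALG_{s'}}{I'}}{C}\le\Exc{\Cost{}{\OPT}{\Inst}}{C}$, as the fixed factor $1/\alpha_{s'}$ divides out. For this I would invoke two sub-instance properties: conditioned on $C$, determining the value of $I'$ amounts to verifying a single interval boundary, i.e.\ reduces to a $B=2$ instance whose realization lands in the value-class for which $\ALG_{s'}$ is optimal, so \Cref{lemm:smithproperty} gives $\Exc{\Cost{}{\ALG_{s'}}{I'}}{C}\le\Exc{\Cost{}{S'}{I'}}{C}$ for every strategy $S'$ on $I'$; and such an $S'$ can be built from \OPT{} by hardcoding the (fixed) phase-$1$ outcomes for free and paying only for the tests of $I'$ that \OPT{} queries, which yields $\Cost{}{S'}{I'}\le\Cost{}{\OPT}{\Inst}$ realization-wise. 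Chaining these inequalities completes the proof.
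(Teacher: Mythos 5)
Your proposal is correct and follows essentially the same route as the paper's proof: the same realization-wise charging at the last step $\hat{\tau}_2$ where $\ALG_h$ is selected, the same observation that $\ALG_{s'}$'s phase-2 cost plus its next wanted test is bounded by its standalone cost on $I'$ (the paper's tests $\ell_1,\dots,\ell_m$ and index $j^\star$ play the role of your $m^\star$ and $l'$), and the same final comparison of $\ALG_{s'}$ on $I'$ to \OPT{} via \Cref{lemm:smithproperty} together with the simulation argument for sub-instances. The only difference is cosmetic: you defer the \OPT{}-comparison to the end, whereas the paper establishes it first.
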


\begin{proof}
    First note that
    \begin{equation}\label{eq:phase2-1}
        \Exc{\cost(\OPT,I)}{I^\GenRR_{\tau_1'}=I'\wedge f(x)=i}\geq\Exc{\cost(\OPT',I')}{I^\GenRR_{\tau_1'}=I'\wedge f(x)=i}
    \end{equation}
    where $\OPT'$ denotes the optimal strategy for $I'$ conditioned on $I^\GenRR_{\tau_1'}=I'\wedge f(x)=i$. Note that, while $\cost(\OPT',I')$ does not refer to $I$, the condition $I^\GenRR_{\tau_1'}=I'$ (which includes $I$) is necessary because $f(x)=i$ refers to $I$. The inequality holds because, given that $I'$ is a sub-instance of $I$, any algorithm for $I$, in particular $\OPT$, induces an algorithm for $I'$. Here, all tests, in particular those that appear in $I$ but not in $I'$, are evaluated according to their success probabilities conditional on $I^\GenRR_{\tau_1'}=I'\wedge f(x)=i$. Note that this condition may introduce (unproblematic) correlations between the successes of these tests not in $I'$. This transformation leads to a realization-wise non-increased cost caused by the new algorithm on $I'$ compared to the original algorithm on $I$ because the costs of the additional tests not appearing in $I'$ are ignored.
    
    Next observe that $I^\GenRR_{\tau_1'}$ does not contain tests that $\GenRR$ conducts in phase $1$. Consequently, the condition $I^\GenRR_{\tau_1'}=I'$ does \emph{not} refer to the tests contained in $I'$ and, conditioned on $I^\GenRR_{\tau_1'}=I'$, any test $j$ that appears in $I'$ is still successful independently with probability $p_j$.
    Therefore, according to Lemma~\ref{lemm:smithproperty}, $\ALG_{s'}$ is the optimal algorithm for $I'$ under the additional condition that $f(x)=i$:
    \begin{equation}\label{eq:phase2-2}
        \Exc{\cost(\OPT',I')}{I^\GenRR_{\tau_1'}=I'\wedge f(x)=i}=\Exc{\cost(\ALG_{s'},I')}{I^\GenRR_{\tau_1'}=I'\wedge f(x)=i}.
    \end{equation}

    Now consider some $h \in \CB{\Fail, \Succ, 1,\dots,k}$.
    Assuming that $\PhaseTwoActive{h}'$ occurs, we define step $\hat{\tau}_2$ with $\tau'_1 < \hat{\tau}_2 \leq \tau'_2$ such that it is the last step in which \GenRR chose $\ALG_h$ to conduct a test. Indeed, $\hat{\tau}_2$ is well defined if $\PhaseTwoActive{h}'$ occurs; we do not use it otherwise.
    
    Suppose for a moment that we let $\ALG_{s'}$ (rather than one of the other $k+1$ algorithms) conduct its next test in every step from step $\tau_1'+1$ onward, until $f$ is evaluated.
    Let $\ell_1,\dots,\ell_m$ be the tests that $\ALG_{s'}$ would then conduct in this order.  
    Moving back to the tests that $\GenRR$ actually performs, by the definition of $\tau_2'$, there must exist a test among $\ell_1,\dots,\ell_m$ not yet conducted up to and including step $\tau_2'-1$.
    Therefore, we can define $j^\star\in\{1,\dots,m\}$ to be minimal such that $\ell_{j^\star}$ has not been conducted up to and including step $\hat{\tau}_2-1 \leq \tau_2'-1$.
    Then, using the order in which $\ALG_{s'}$ conducts tests, we obtain
    \begin{equation}\label{eq:phase2-3}
        \sum_{\tau=\tau_1'+1}^{\hat{\tau}_2-1}\cost^\GenRR_\tau(\ALG_{s'},I)\leq \sum_{j=1}^{m} c_{\ell_j}-c_{\ell_{j^\star}}\, .
    \end{equation}
    The way that $\GenRR$ chooses the test to be conducted in step $\hat{\tau}_2$ now implies
    \begin{align}
        \frac{1}{\alpha_h}\cdot \sum_{\tau=1}^{\hat{\tau}_2}\cost^\GenRR_\tau(\ALG_h,I)&\leq \frac{1}{\alpha_{s'}}\cdot\left(\sum_{\tau=1}^{\hat{\tau}_2-1}\cost^\GenRR_\tau(\ALG_{s'},I)+c_{\ell_{j^\star}}\right)\notag\\
        &\leq \frac{1}{\alpha_{s'}}\cdot\left(\sum_{\tau=1}^{\tau_1'}\cost^\GenRR_\tau(\ALG_{s'},I)+\sum_{j=1}^{m} c_{\ell_j}\right)\notag\\
        &= \frac{1}{\alpha_{s'}}\cdot\left(\sum_{\tau=1}^{\tau_1'}\cost^\GenRR_\tau(\ALG_{s'},I)+\cost(\ALG_{s'},I^\GenRR_{\tau_1'})\right),\label{eq:phase2-3a}
    \end{align}
    where the second inequality follows from Inequality~\eqref{eq:phase2-3}, and the equality follows from the definition of $\ell_1,\dots,\ell_m$ (assuming that the outcomes of corresponding tests in $I$ and $I^\GenRR_{\tau_1'}$ are coupled in the sense that they are always identical). Rearranging then yields
    \begin{equation*}
        \frac{1}{\alpha_h}\cdot \sum_{\tau=1}^{\hat{\tau}_2}\cost^\GenRR_\tau(\ALG_h,I)-\frac{1}{\alpha_{s'}}\cdot\sum_{\tau=1}^{\tau_1'}\cost^\GenRR_\tau(\ALG_{s'},I)\leq\frac{1}{\alpha_{s'}}\cdot\cost(\ALG_{s'},I^\GenRR_{\tau_1'}).
    \end{equation*}
    By taking the expected value of this inequality (which holds realization-wise), we obtain
    \begin{align*}
        &\Ptyc{\PhaseTwoActive{h}'}{I^\GenRR_{\tau_1'}=I'\wedge f(x)=i}\\
        \cdot\;&\Exc{\frac{1}{\alpha_{h}}\cdot\sum_{\tau=1}^{\hat{\tau}_2}\cost^{\GenRR}_\tau(\ALG_{h},I)-\frac{1}{\alpha_{s'}}\cdot\sum_{\tau=1}^{\tau_1'}\cost^{\GenRR}_\tau(\ALG_{s'},I)}{I^\GenRR_{\tau_1'}=I'\wedge f(x)=i\wedge\PhaseTwoActive{h}'}\\
        \leq\phantom{\cdot\;}& \Ptyc{\PhaseTwoActive{h}'}{I^\GenRR_{\tau_1'}=I'\wedge f(x)=i}\cdot \Exc{\frac{1}{\alpha_{s'}}\cdot\cost(\ALG_{s'},I^\GenRR_{\tau_1'})}{I^\GenRR_{\tau_1'}=I'\wedge f(x)=i\wedge \PhaseTwoActive{h}'}\\
        \leq\phantom{\cdot\;}& \Exc{\frac{1}{\alpha_{s'}}\cdot\cost(\ALG_{s'},I')}{I^\GenRR_{\tau_1'}=I'\wedge f(x)=i},
    \end{align*}
    where the second step is due to the law of total expectation and the non-negativity of the cost function.
This, together with the fact that $\ALG_h$ does not conduct tests after step $\hat{\tau}_2$ as well as Inequalities~\eqref{eq:phase2-2} and~\eqref{eq:phase2-1}, implies the claim.
\end{proof}

Now the theorem follows quite easily from combining Lemma~\ref{lemm:RR:Phase1} and Lemma~\ref{lemm:RR:Phase2}.

\begin{proof}[Proof of Theorem~\ref{thm:TwRR}.]
    Consider instance $I'$ and $i\in\{1,\dots,B-1\}$ such that $I^\TwRR_{\tau_1}=I'\wedge f(x)=i$ with nonzero probability. Further, let $h \in \CB{\Fail, \Succ, \Ch}$.
    We consider whether each of $\PhaseTwoActive{h}$ and $\PhaseTwoInactive{h}$ occurs with nonzero probability under the condition $I^\TwRR_{\tau_1}=I'\wedge f(x)=i$.
    In case the first event occurs with nonzero probability, we use Lemmas~\ref{lemm:RR:Phase1} and~\ref{lemm:RR:Phase2}; for the second event, we only need Lemma~\ref{lemm:RR:Phase1}. We then combine the inequalities for the two events, in turn combine the resulting inequalities across all $I'$, $i$, and $h$, and finally optimize $\ALPHSz$, $\ALPHSo$, and $\ALPHCh$.
    
    We start with the case $\Ptyc{\PhaseTwoActive{h}}{I^\TwRR_{\tau_1}=I'\wedge f(x)=i} > 0$. By first applying \Cref{lemm:RR:Phase2} with \TwRR taking the role of \GenRR (meaning that $s$, $\tau_1$, $\tau_2$, and $E_h$ take the role of $s'$, $\tau_1'$, $\tau_2'$, and $E_h'$, respectively) and then applying Lemma~\ref{lemm:RR:Phase1} (which even holds realization-wise), we obtain
    \begin{align}
        \Ptyc{\PhaseTwoActive{h}}{I^\TwRR_{\tau_1}=I'\wedge f(x)=i}\;\cdot\;&\Exc{\sum_{\tau=1}^{\tau_2}\cost^{\TwRR}_\tau(\ALG_{h},I)}{I^\TwRR_{\tau_1}=I'\wedge f(x)=i\wedge\PhaseTwoActive{h}}\notag\\
        \leq  \Ptyc{\PhaseTwoActive{h}}{I^\TwRR_{\tau_1}=I'\wedge f(x)=i}\;\cdot\;&\Exc{\frac{\alpha_{h}}{\alpha_{s}}\cdot\sum_{\tau=1}^{\tau_1}\cost^{\TwRR}_\tau(\ALG_{s},I)}{I^\TwRR_{\tau_1}=I'\wedge f(x)=i\wedge\PhaseTwoActive{h}} \notag\\
        + \;& \Exc{\frac{\alpha_{h}}{\alpha_{s}}\cdot\Cost{}{\OPT}{\Inst}}{I^\TwRR_{\tau_1}=I'\wedge f(x)=i}\notag\\
        \leq \Ptyc{\PhaseTwoActive{h}}{I^\TwRR_{\tau_1}=I'\wedge f(x)=i} \;\cdot\;&\Exc{\frac{\alpha_{h}}{\ALPHCh}\cdot\Cost{}{\OPT}{\Inst}}{I^\TwRR_{\tau_1}=I'\wedge f(x)=i\wedge\PhaseTwoActive{h}}\notag\\
        +\;& \Exc{\frac{\alpha_{h}}{\alpha_{s}}\cdot\Cost{}{\OPT}{\Inst}}{I^\TwRR_{\tau_1}=I'\wedge f(x)=i}.\label{eq:rr3-thm-phasetwoactive}
    \end{align}
    For the case that $\Ptyc{\PhaseTwoInactive{h}}{I^\TwRR_{\tau_1}=I'\wedge f(x)=i} > 0$, we only apply Lemma~\ref{lemm:RR:Phase1} and obtain
    \begin{align}
    &\Ptyc{\PhaseTwoInactive{h}}{I^\TwRR_{\tau_1}=I'\wedge f(x)=i}\cdot\Exc{\sum_{\tau=1}^{\tau_2}\cost^{\TwRR}_\tau(\ALG_{h},I)}{I^\TwRR_{\tau_1}=I'\wedge f(x)=i\wedge\PhaseTwoInactive{h}}\notag\\
    = \;&\Ptyc{\PhaseTwoInactive{h}}{{I^\TwRR_{\tau_1}=I'\wedge f(x)=i}}\cdot\Exc{\sum_{\tau=1}^{\tau_1}\cost^{\TwRR}_\tau(\ALG_{h},I)}{I^\TwRR_{\tau_1}=I'\wedge f(x)=i\wedge\PhaseTwoInactive{h}}\notag\\
    \leq \;&\Ptyc{\PhaseTwoInactive{h}}{{I^\TwRR_{\tau_1}=I'\wedge f(x)=i}}\cdot\Exc{\frac{\alpha_{h}}{\ALPHCh}\cdot\Cost{}{\OPT}{\Inst}}{I^\TwRR_{\tau_1}=I'\wedge f(x)=i\wedge\PhaseTwoInactive{h}}.\label{eq:rr3-thm-phasetwoinactive}
    \end{align}
    
    For the next step, define $\mathcal{E}$ to be the set of events from $\{\PhaseTwoActive{h},\PhaseTwoInactive{h}\}$ with nonzero probability conditional on $I^\TwRR_{\tau_1}=I'\wedge f(x)=i$. Then, by the law of total expectation, we obtain
    \begin{align*}
    &\Exc{\sum_{\tau=1}^{\tau_2}\cost^{\TwRR}_\tau(\ALG_{h},I)}{I^\TwRR_{\tau_1}=I'\wedge f(x)=i}\\
        = \;& \sum_{E\in\mathcal{E}}\Ptyc{E}{I^\TwRR_{\tau_1}=I'\wedge f(x)=i}\cdot\Exc{\sum_{\tau=1}^{\tau_2}\cost^{\TwRR}_\tau(\ALG_{h},I)}{I^\TwRR_{\tau_1}=I'\wedge f(x)=i\wedge E}\\
        \leq \;&\phantom{{}+{}}\Exc{\frac{\alpha_{h}}{\ALPHCh}\cdot\Cost{}{\OPT}{\Inst}}{I^\TwRR_{\tau_1}=I'\wedge f(x)=i}\\
        &+\Exc{\frac{\alpha_{h}}{\alpha_{s}}\cdot\Cost{}{\OPT}{\Inst}}{I^\TwRR_{\tau_1}=I'\wedge f(x)=i}.
    \end{align*}
    In the second step, we use Equations~\eqref{eq:rr3-thm-phasetwoactive} (if $\PhaseTwoActive{h}\in\mathcal{E}$) and~\eqref{eq:rr3-thm-phasetwoinactive} (if $\PhaseTwoInactive{h}\in\mathcal{E}$) as well as the law of total expectation again.
    
    By applying the law of total expectation over all $I'$ and $i\in\{1,\dots,B-1\}$ such that $I^\TwRR_{\tau_1}=I'$ and $f(x)=i$ with nonzero probability and rearranging, we obtain from the previous inequality that
    \begin{equation}
        \Ex{\sum_{\tau=1}^{\tau_2}\cost^\TwRR_\tau\RB{\ALG_h,I}} \leq \alpha_h \cdot \Ex{\RB{\frac{1}{\ALPHCh} + \frac{1}{\alpha_s}} \cdot\Cost{}{\OPT}{\Inst}} \, .\label{eq:eq:rr3-thm-alphas}
    \end{equation}
    Now using that
    \begin{equation*}
        \cost(\TwRR,I)=\sum_{\tau=1}^{\tau_2}\cost^\TwRR_\tau(\ALGSz,I)+\sum_{\tau=1}^{\tau_2}\cost^\TwRR_\tau(\ALGSo,I)+\sum_{\tau=1}^{\tau_2}\cost^\TwRR_\tau(\ALGCh,I),    
    \end{equation*}
    we get from Inequality~\eqref{eq:eq:rr3-thm-alphas} that
    \begin{align*}
        \Ex{\cost(\TwRR,I)}&\leq(\ALPHSz+\ALPHSo+\ALPHCh)\cdot\Ex{\left(\frac{1}{\ALPHCh}+\frac{1}{\alpha_s}\right)\cdot\Cost{}{\OPT}{\Inst}}\\
        &=(2\ALPHSz+\ALPHCh)\cdot\left(\frac{1}{\ALPHCh}+\frac{1}{\ALPHSz}\right)\cdot\Ex{\Cost{}{\OPT}{\Inst}},
    \end{align*}
    where we set $\ALPHSz=\ALPHSo\;(=\alpha_s)$ for the equality. Minimizing the expression in front of the expected value on the right-hand side yields a minimum of $3+2\sqrt{2}$ at $\ALPHCh/\ALPHSz=\sqrt{2}$, completing the proof.
\end{proof}

Finally, we remark that the previous arguments, by simply setting $\ALPHSz=\ALPHSo=\ALPHCh$ rather than optimizing, show that the resulting unweighted round-robin algorithm is a $6$-approximation algorithm. In the following, we show that the same approximation ratio can be achieved without $\ALGCh$.

\subsection{Two-Way Round Robin}
\label{subsec:rr2}

In this section, we investigate the simpler algorithm $\TwoRR=\RdRn \RB{\ALGSz, \ALGSo, 1, 1}$ based on the insights and analysis from the previous section.
We prove that this natural strategy obtained by interweaving \ALGSz{} and \ALGSo{}---the optimal algorithms for $B = 2$ when the score is known---achieves in fact a similarly small approximation guarantee.

\begin{theorem}
    \label{thm:TwoRR}
    The algorithm \TwoRR{} is a $6$-approximation algorithm for Stochastic Score Classification.
\end{theorem}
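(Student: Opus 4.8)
The plan is to reuse the two-phase framework from \Cref{thm:TwRR}, exploiting that \TwoRR{} is exactly \GenRR{} with $k=0$ and $\ALPHSo=\ALPHSz=1$. Consequently \Cref{lemm:RR:Phase2} applies verbatim and handles phase~2 (there $\ALG_{s'}$ is optimal on the remaining sub-instance by \Cref{lemm:smithproperty}); the only missing ingredient is a substitute for \Cref{lemm:RR:Phase1}, which previously charged the phase-1 cost to \ALGCh{}. Since \TwoRR{} does not run \ALGCh{}, I would instead prove the following phase-1 bound: conditioned on $f(x)=i$, for each $h\in\CB{\Fail,\Succ}$ it holds realization-wise that $\sum_{\tau=1}^{\tau_1}\cost^\TwoRR_\tau(\ALG_h,I)\le 2\Cost{}{\OPT}{\Inst}$. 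Given this, the theorem follows by mimicking the proof of \Cref{thm:TwRR} with all weights equal to $1$, the phase-1 constant $2$ taking the role of $\nicefrac{1}{\ALPHCh}$, and only two sub-algorithms: each of $\ALGSz,\ALGSo$ then contributes at most $(2+1)\Ex{\Cost{}{\OPT}{\Inst}}$, and summing yields $(1+1)\cdot(2+1)=6$ times \OPT{}. Pleasingly, this is the same product as \TwRR{}'s $3\cdot 2$: dropping \ALGCh{} lowers the number of sub-algorithms from $3$ to $2$, while the factor-$2$ loss raises the phase-1 constant from $1$ to $2$.

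The crux is the phase-1 bound, whose engine is the elementary fact that for every test $j$ one of the two ratios is small: $\min\CB{\nicefrac{c_j}{p_j},\nicefrac{c_j}{(1-p_j)}}=\nicefrac{c_j}{\max\CB{p_j,1-p_j}}\le 2c_j$. Hence at least one of $\ALGSz,\ALGSo$ ranks $j$ at ratio at most $2c_j$, so, as long as $j$ has not yet been performed, every test that this sub-algorithm has performed (being earlier in its ratio order) costs at most $2c_j$. This is the ``within a factor of~$2$'' property. I combine it with the accounting from \Cref{lemm:RR:Phase1}: put $n_i:=t_i+(n+1-t_{i+1})$, so that $\Cost{}{\OPT}{\Inst}\ge\sum_{j'=1}^{n_i}c_{\SigVal{\Ch}{j'}}=:L$, and note that any $n_i$ performed tests already force $t_i$ successes or $n+1-t_{i+1}$ failures; hence strictly fewer than $n_i$ tests are performed before phase~1 ends.

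Now fix $h\in\CB{\Fail,\Succ}$, let $\hat\tau_1\le\tau_1$ be the last phase-1 step at which \TwoRR{} selects $\ALG_h$ (if there is none, the bound is trivial), and let $j:=\SigVal{\Ch}{j^\star}$ be the cheapest test not performed by step $\hat\tau_1-1$; as in \Cref{lemm:RR:Phase1} we get $j^\star\le n_i$. Let $\ALG_{h'}$ be a sub-algorithm ranking $j$ at ratio at most $2c_j$, with $\delta_{h'}\le 2c_j$ its next-test cost. Of the fewer than $n_i$ tests performed before step $\hat\tau_1$, the tests $\SigVal{\Ch}{1},\dots,\SigVal{\Ch}{j^\star-1}$ contribute at most $\sum_{j'=1}^{j^\star-1}c_{\SigVal{\Ch}{j'}}$, while each of the remaining performed tests costs at least $c_j$ (not being among the $j^\star-1$ cheapest) and, if performed by $\ALG_{h'}$, at most $2c_j$. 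Using $\hat\tau_1\le n_i$ and that the $n_i-j^\star+1$ tests $\SigVal{\Ch}{j^\star},\dots,\SigVal{\Ch}{n_i}$ each cost at least $c_j$, this gives
\begin{equation*}
\sum_{\tau=1}^{\hat\tau_1-1}\cost^\TwoRR_\tau(\ALG_{h'},I)+\delta_{h'}\le\sum_{j'=1}^{j^\star-1}c_{\SigVal{\Ch}{j'}}+(n_i-j^\star+1)\cdot 2c_j\le\sum_{j'=1}^{j^\star-1}c_{\SigVal{\Ch}{j'}}+2\sum_{j'=j^\star}^{n_i}c_{\SigVal{\Ch}{j'}}\le 2L.
\end{equation*}
Since all weights are $1$ and \TwoRR{} selected $\ALG_h$ rather than $\ALG_{h'}$ at step $\hat\tau_1$, the selection rule gives $\sum_{\tau=1}^{\hat\tau_1}\cost^\TwoRR_\tau(\ALG_h,I)\le\sum_{\tau=1}^{\hat\tau_1-1}\cost^\TwoRR_\tau(\ALG_{h'},I)+\delta_{h'}\le 2L$; as $\ALG_h$ performs no further tests up to $\tau_1$, the desired bound $\sum_{\tau=1}^{\tau_1}\cost^\TwoRR_\tau(\ALG_h,I)\le 2\Cost{}{\OPT}{\Inst}$ follows.

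The main obstacle is precisely this accounting. The factor-$2$ property alone bounds only individual test costs, and the naive estimate ``fewer than $n_i$ tests, each of cost $\le 2c_j$'' gives a useless bound of order $n_i c_j$. The decisive point is that a large index $j^\star$ forces the $j^\star-1$ genuinely cheapest tests to have been performed already, so that the budget of $n_i$ tests leaves room for only few expensive tests of cost in $[c_j,2c_j]$, which are then paid for by the tail $\sum_{j'\ge j^\star}c_{\SigVal{\Ch}{j'}}$ of $L$. Once the phase-1 bound is in place, I would plug it into \Cref{lemm:RR:Phase2} (for both the $\ALG_{s'}$ phase-1 term appearing there and the $\ALG_h$ phase-1 cost in the phase-2-inactive case) and combine the two phases via the law of total expectation exactly as in the proof of \Cref{thm:TwRR}, obtaining the factor $6$.
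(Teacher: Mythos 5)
Your proof is correct, and its overall skeleton coincides with the paper's: the same two-phase decomposition, \Cref{lemm:RR:Phase2} applied with $k=0$ and unit weights for phase~2, a realization-wise phase-1 bound of $2\cdot\Cost{}{\OPT}{\Inst}$ per sub-algorithm, and the same law-of-total-expectation combination giving $2\cdot(2+1)=6$. Where you genuinely depart from the paper is in how that phase-1 bound (the paper's \Cref{lemm:Alg2:Phase1}) is proved. The paper proceeds structurally: it partitions the $n_i$ cheapest tests $R$ into $R_\Fail$ and $R_\Succ$ by the threshold $p_j\gtrless\nicefrac12$, defines for each side sets $S_h$ and $A_h$ of tests ranked earliest by $\Sig{h}$, proves by induction a matching-type helper lemma (\Cref{lemm:Alg2:helper}) showing each such set costs at most $2\sum_{j\in R_h}c_j$, and uses disjointness of $A_\Fail$ and $A_\Succ$ to exhibit a sub-algorithm $\ALG_u$ to charge against via the round-robin rule. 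You instead anchor everything at the cheapest unperformed test $j=\SigVal{\Ch}{j^\star}$: the elementary inequality $\min\CB{\nicefrac{c_j}{p_j},\nicefrac{c_j}{(1-p_j)}}\le 2c_j$ singles out a sub-algorithm $\ALG_{h'}$ all of whose performed tests cost at most $2c_j$ (they precede $j$ in $\Sig{h'}$ order as long as $j$ is unperformed), and a counting argument closes the bound: the $j^\star-1$ cheapest tests are all performed, the phase-1 budget $\hat\tau_1\le n_i$ leaves room for at most $n_i-j^\star$ further tests of $\ALG_{h'}$ plus its next test, and the tail $\sum_{j'\ge j^\star}c_{\SigVal{\Ch}{j'}}$ of the lower bound $L\le\Cost{}{\OPT}{\Inst}$ pays for those at a factor of $2$. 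Your route avoids both the auxiliary set system and the induction, and it makes transparent where the factor $2$ originates; it also handles the competition of the two sub-algorithms for the same tests implicitly (the property ``performed tests of $\ALG_{h'}$ precede $j$'' is unaffected by which sub-algorithm performed which test), whereas the paper makes this competition explicit through the sets $A_h$. Both arguments yield the same constant $2$ for phase~1, hence the same final ratio of $6$, so nothing is lost by your shorter derivation.
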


Our analysis works again in two phases, which we define analogously to \Cref{subsec:rr3} but with respect to \TwoRR instead of \TwRR.
We also define the random variables $\tau_1$, $\tau_2$, and $s$ as well as the event $E_h$ as before, again with respect to \TwoRR.
While we can directly apply \Cref{lemm:RR:Phase2} for the second phase here, let us look more closely at how the cost in phase~1 can be bounded.

For \TwRR, we have introduced sub-algorithm \ALGCh{} as an additional component. This makes bounding the cost in phase~1 more or less straightforward since it provides us with a direct connection between the cost of \TwRR and the cost of cheap items, which were used as a lower bound on the cost of the optimal strategy.

Intuitively, one would expect that cheap tests are also attractive for \TwoRR (depending on their probability of success, either for \ALGSo{} or \ALGSz{} or both) and thus will be inspected early on.
In this section, we show that by looking at inspected tests from the right angle and cleverly assigning them to sets for which we can bound the cost, \TwoRR actually has a small constant approximation ratio.

Unfortunately, there is no direct bound between the cost of the $i$-th cheapest overall test and the cost of the $i$-th test selected by \ALGSo{} or \ALGSz{}:
For example, \ALGSo{} prefers a test with cost $1 - 2 \varepsilon$ and probability $1-\varepsilon$ over a test with cost and probability equal to $\varepsilon$.

However, if we compare the cost of the $i$-th test selected by \ALGSo{} with the cost of the $i$-th cheapest test among those with a lower bound on success probability $p_j$---let's pick $\nicefrac{1}{2}$ for \mbox{symmetry---,} we can indeed establish a bound:
Consider the first test $j$ selected by \ALGSo{} (i.e., $j = \SigVal{\Succ}{1}$) and the cheapest test $j'$ among tests with $p_{j'} \geq \frac12$.
We obtain $c_j \leq \nicefrac{p_j}{p_{j'}}\cdot c_{j'} \leq 2 c_{j'}$ since $p_j \leq 1$ and $p_{j'} \geq \nicefrac12$.
All in all, $j$ costs at most twice as much as $j'$.
The same principle works inductively and, of course, symmetrically for \ALGSz{} with an upper bound of $\nicefrac{1}{2}$ on $p_j$.
We establish a generalized version of this correspondence as a building block for the analysis of phase~1 in the following lemma.

Towards this, we partition the set of tests $N$ in two \enquote{half spaces} $\Hsp_\HSfail := \CB{1 \leq j \leq n : p_j < \nicefrac12}$ and $\Hsp_\HSsucc := \CB{1 \leq j \leq n : p_j \geq \nicefrac12}$ according to their success probability.
Informally, the lemma states the following: For a given set $F$ of tests contained in $\Hsp_\HSsucc$, if we let $\ALGSo$ pick tests from the union of $F$ and an arbitrary set of further tests, then the total cost of the first $k$ tests chosen by $\ALGSo$ is at most twice the cost of the cheapest $k$ tests from $F$.
This level of generality is useful later on in the proof of \Cref{thm:TwoRR}.

\begin{lemma}\label{lemm:Alg2:helper}
    For $u \in \CB{\Fail, \Succ}$, let $F \subseteq \Hsp_u$ and $G \subseteq N$ such that $F \subseteq G$.
    Let $G_{\lvert F \rvert}$ be the subset of $G$ containing the $\lvert F \rvert$ tests with smallest values of $\Sig{u}^{-1}$.
    Then, 
    \begin{equation*}
        \sum_{j \in G_{\lvert F \rvert}} c_j \leq  2 \cdot \sum_{j \in F} c_j \, .
    \end{equation*}
\end{lemma}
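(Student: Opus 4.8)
The plan is to handle both cases $u=\Succ$ and $u=\Fail$ at once by working with the ratio that drives the order of $\ALG_u$. Write $\rho(j):=\CiPi[j]$ and $q_j:=p_j$ if $u=\Succ$, and $\rho(j):=\CiQi[j]$ and $q_j:=1-p_j$ if $u=\Fail$; then $c_j=q_j\cdot\rho(j)$ in both cases, and by the definition of $H_u$ every $j\in H_u$ satisfies $q_j\ge\tfrac12$. Recall that $\Sig{u}$ lists the tests by nondecreasing $\rho$, so the relation $j\prec j'\;:\Leftrightarrow\;\Sig{u}^{-1}(j)<\Sig{u}^{-1}(j')$ is a strict total order that refines the order by $\rho$; in particular $j\prec j'$ implies $\rho(j)\le\rho(j')$, and $G_{\lvert F\rvert}$ is exactly the set of the $\lvert F\rvert$ smallest elements of $G$ under $\prec$. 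I phrase everything through $\prec$ rather than through $\rho$ directly so that ties in $\rho$ cause no trouble.

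Set $k:=\lvert F\rvert$ and list $F=\{f_1,\dots,f_k\}$ and $G_{\lvert F\rvert}=\{g_1,\dots,g_k\}$ in increasing $\prec$-order, i.e.\ $f_1\prec\cdots\prec f_k$ and $g_1\prec\cdots\prec g_k$. The crucial step is the pointwise domination $g_i\preceq f_i$ for every $i\in\{1,\dots,k\}$. To prove it, observe that $f_1,\dots,f_i$ are $i$ elements of $G$ (here is where $F\subseteq G$ enters) that are all $\preceq f_i$, so $G$ contains at least $i$ elements that are $\preceq f_i$. Since $g_1,\dots,g_i$ are the $i$ smallest elements of $G$ under $\prec$, this forces $g_i\preceq f_i$: otherwise $f_i\prec g_i$, and then every element of $G$ that is $\preceq f_i$ would be strictly smaller than $g_i$, hence contained in the $(i-1)$-element set $\{g_1,\dots,g_{i-1}\}$, contradicting the count of at least $i$ such elements. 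Because $\prec$ refines the order by $\rho$, we conclude $\rho(g_i)\le\rho(f_i)$.

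It then remains to convert this into a cost bound. For each $i$, using $c_j=q_j\rho(j)$, the trivial bound $q_{g_i}\le 1$, the domination $\rho(g_i)\le\rho(f_i)$, and $q_{f_i}\ge\tfrac12$ (since $f_i\in F\subseteq H_u$), I obtain
\[
    c_{g_i}=q_{g_i}\,\rho(g_i)\le\rho(f_i)=\frac{c_{f_i}}{q_{f_i}}\le 2\,c_{f_i}.
\]
Summing over $i=1,\dots,k$ yields $\sum_{j\in G_{\lvert F\rvert}}c_j\le 2\sum_{j\in F}c_j$, which is the claim.

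I expect the pointwise domination $g_i\preceq f_i$ to be the heart of the argument: it is the only place the hypothesis $F\subseteq G$ is used, and it is where one must be careful about ties, which is exactly why I route the comparison through the strict order $\prec$ induced by $\Sig{u}^{-1}$. Once the domination is established, the factor $2$ appears cleanly from the single inequality $q_{g_i}\le 1\le 2\,q_{f_i}$, i.e.\ from the fact that the tests in $F$ have $q$-value at least $\tfrac12$ while the (possibly entirely different) tests in $G_{\lvert F\rvert}$ have $q$-value at most $1$.
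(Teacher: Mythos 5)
Your proof is correct. The core of your argument and of the paper's is the same chain of inequalities $c_g \le \rho(g) \le \rho(f) = c_f/q_f \le 2c_f$, applied to a pairing between the elements of $G_{\lvert F\rvert}$ and those of $F$; where you genuinely differ is in how that pairing is constructed. The paper fixes $u=\Succ$ by symmetry, enumerates $F$ by increasing \emph{cost} (via $\Sig{\Ch}^{-1}$) and $G$ by increasing ratio, and proves by induction on $k$ the prefix-wise statement $\sum_{j\in G_k}c_j\le 2\sum_{j\in F_k}c_j$. Because the $k$-th cheapest element $f$ of $F$ may already lie in $G_{k-1}$, the induction step there needs a substitution trick: it picks some $f'\in F_k\setminus G_{k-1}$ (nonempty by a cardinality count) with $c_{f'}\le c_f$ and compares $g$ to $f'$ instead of to $f$. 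You instead order \emph{both} sets by the ratio, using the tie-breaking total order $\prec$ induced by $\Sig{u}^{-1}$, and establish the pointwise domination $g_i\preceq f_i$ by the standard order-statistics counting argument, which uses $F\subseteq G$ directly and needs neither induction nor a substitute element; you also treat $\Fail$ and $\Succ$ uniformly through the $q_j$ notation rather than by symmetry. Your route is arguably cleaner: the domination step is a self-contained combinatorial fact, and the factor $2$ then falls out of a single displayed inequality. What the paper's formulation additionally yields is the family of prefix inequalities relating cost-prefixes of $F$ to ratio-prefixes of $G$, but since only the case $k=\lvert F\rvert$ is ever invoked (in the proof of the phase-1 lemma for $\TwoRR$), nothing is lost by your approach.
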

\begin{proof}
    Due to symmetry, it suffices to show the statement for $u = \Succ$. For $k\in\{0,\dots,|F|\}$, let $F_k$ be the subset of $F$ containing the $k$ tests with smallest value of $\Sig{\Ch}^{-1}$, and $G_k$ be the subset of $G$ containing the $k$ tests with smallest value of $\Sig{\Succ}^{-1}$. We show that, for all such $k$, 
    \begin{equation*}\label{eqn:twobound}
        \sum_{j \in G_k} c_j \leq  2 \cdot \sum_{j \in F_k} c_j \, .
    \end{equation*}
    We proceed by induction on $k$. 
    
    For $k = 0$, the statement clearly holds.    
    Now consider $k>0$, and assume that we have established the hypothesis for all smaller values. Let $j_F$ and $j_G$ be the tests such that $F_k = F_{k-1} \cup \CB{j_F}$ and $G_k = G_{k-1} \cup \CB{j_G}$, i.e., the tests with the $k$-th lowest value of $\Sig{\Ch}^{-1}$ and $\Sig{\Succ}^{-1}$, in $F_k$ and $G_k$, respectively.
    While it is possible that $j_F \in G_{k-1}$, there exists a test $j_F' \in F_k\setminus G_{k-1}$ with $c_{j_F'} \leq c_{j_F}$.
    By choice of $j_F'$ and definition of $G_{k}$, we have $\nicefrac{c_{j_G}}{p_{j_G}} \leq \nicefrac{c_{j_F'}}{p_{j_F'}}$.
    Since $p_{j_G} \leq 1$ and $p_{j_F'} \geq \nicefrac12$, we obtain $c_{j_G} \leq \nicefrac{p_{j_G}}{p_{j_F'}} \cdot c_{j_F'} \leq 2 c_{j_F'} \leq 2 c_{j_F}$.
    Combining this with the induction hypothesis yields
    \begin{equation*}
        \sum_{j \in G_k} c_j = c_{j_G} + \sum_{j \in G_{k-1}} c_j \,\, \leq \,\, 2 c_{j_F} + 2 \cdot \sum_{j \in F_{k-1}} c_j = 2 \cdot \sum_{j \in F_k} c_j \, .
    \end{equation*}
    This completes the induction step and therefore the proof.
\end{proof}

One (flawed) idea to proceed for bounding the cost in phase~1 would be the following, again using the notation $n_i=t_i+(n + 1 - t_{i + 1})$:
Partition the set \ChSet of $n_i$ \emph{cheapest} tests by success probability:
$\ChSet_\HSfail := \ChSet \cap \Hsp_\HSfail$ and $\ChSet_\HSsucc := \ChSet \cap \Hsp_\HSsucc$ for $\ChSet := \CB{\SigVal{\Ch}{1}, \SigVal{\Ch}{2}, \ldots, \SigVal{\Ch}{n_i}}$.

Then, by \Cref{lemm:Alg2:helper}, we can bound the cost of the first $|\ChSet_\HSsucc|$ tests from $\ALGSo$ with a factor of two against $\sum_{j \in \ChSet_\HSsucc} c_j$ and of the first $|\ChSet_\HSfail|$ tests from $\ALGSz$ with a factor of two against $\sum_{j \in \ChSet_\HSfail} c_j$. If we apply this bound at the first time when the $|\ChSet_\HSsucc|$ tests with smallest value of $\sigma_{\Succ}^{-1}$ and the $|\ChSet_\HSfail|$ tests with smallest value of $\sigma_{\Fail}^{-1}$ have been inspected, in combination with how the round-robin scheme is defined, we obtain an upper bound of $4 \cdot \sum_{j \in \ChSet_\HSsucc} c_j$ or $4 \cdot \sum_{j \in \ChSet_\HSfail} c_j \leq 4 \cdot \sum_{j \in \ChSet} c_j \leq 4\cdot\Cost{}{\OPT}{\Inst}$ on the total cost so far.

The problem is that we have no guarantee that at that point we have already reached the end of phase~1.
Some tests may have appeared in the prefixes of both sequences, so that we haven't reached a count of $n_i$ unique inspected tests yet.
Indeed, cheap tests are attractive for both \ALGSo and \ALGSz if the probability is around $\frac12$.
We therefore see that simply counting the number of tests of each (sub-)algorithm in this way is not sufficient. Instead, we will develop a way of counting unique inspected tests while at the same time still being able to bound the cost of all (i.e., the uniquely counted as well as the other) tests. We need to define a number of sets, visualized in \Cref{fig:Alg2:set-defs}.

\newbox\aMark
\setbox\aMark\hbox{\begin{pgfpicture}\pgfuseplotmark{x}\end{pgfpicture}}
\newbox\sMark
\setbox\sMark\hbox{\begin{pgfpicture}\pgfuseplotmark{square}\end{pgfpicture}}
\newbox\rMark
\setbox\rMark\hbox{\begin{pgfpicture}\pgfuseplotmark{o}\end{pgfpicture}}
\newbox\bMark
\setbox\bMark\hbox{\begin{pgfpicture}\pgfuseplotmark{+}\end{pgfpicture}}

\begin{figure}
    \centering
    \begin{tikzpicture}[scale=0.65]
    \pgfmathsetmacro{\ysc}{0.75}
    \tikzset{bar/.style={very thick, color=black!80}}
    \tikzset{outerbar/.style={bar, very thick}}
    \tikzset{pbar/.style={very thick, gray!50}}
    \tikzset{pqline/.style={dotted, very thick, black}}
    \tikzset{hsline/.style={pqline}}
    \tikzset{setlabelline/.style={color=black, thick}}
    \tikzset{plotbase/.style={mark options={scale=3.5}, only marks}}
    \tikzset{a1/.style={plotbase,mark=x}} 
    \tikzset{s0/.style={plotbase,mark=square,mark options={scale=2}}} 
    \tikzset{r1/.style={plotbase,mark=o,mark options={scale=2}}} 
    \tikzset{b1/.style={plotbase,mark=+,mark options={scale=3.5}}} 
    \tikzset{a1r1/.style={plotbase,mark options={scale=2}}} 
    
    \draw[outerbar, -stealth] (0,0) node[below]{$p = 0$} -- (0, 13.5*\ysc) node[left]{$c_j$};
    \draw[outerbar] (0,0) -- (20, 0);
    \draw[outerbar, -stealth] (20, 0) node[below]{$p = 1$} -- (20, 13.5*\ysc);
    \draw[pbar] (10,0) node[below,color=black]{$p =  \frac12$} -- (10,12.5*\ysc);
    \draw[pqline] (0,0) -- (20, 12.5*\ysc);
    \pgfmathsetmacro{\qplinecost}{9.2};
    \draw[pqline] (10,0.5*\qplinecost*\ysc) -- (0, \qplinecost*\ysc);
    \pgfmathsetmacro{\hspmax}{8.1};
    \draw[hsline] (10,0) -- (10, \hspmax *\ysc) -- (20, \hspmax * \ysc);
    
    \coordinate (p1) at (14,6*\ysc);
    \draw plot[a1] coordinates{(p1)};
    \draw plot[r1] coordinates{(p1)};
    
    \coordinate(p2) at (19,10.5*\ysc);
    \draw plot[a1] coordinates{(p2)};
    
    \coordinate(p3) at (10.5,7*\ysc);
    \coordinate(p4) at (11.5,7.6*\ysc);
    \coordinate(p5) at (12,6.8*\ysc);
    \draw plot[r1] coordinates{(p3)};
    \draw plot[r1] coordinates{(p4)};
    \draw plot[r1] coordinates{(p5)};
    \draw plot[a1] coordinates{(p5)};
    
    \coordinate(p6) at (9.4,5.4*\ysc);
    \draw plot[a1] coordinates{(p6)};
    
    \coordinate(p7) at (3,1*\ysc);
    \draw plot[b1] coordinates{(p7)};
    \draw plot[s0] coordinates{(p7)};
    
    \coordinate(p8) at (1,1.5*\ysc);
    \coordinate(p9) at (2,2.3*\ysc);
    \coordinate(p10) at (3,4*\ysc);
    \coordinate(p12) at (1,4*\ysc);
    \coordinate(p13) at (4,3.5*\ysc);
    \coordinate(p11) at (1,8.5*\ysc);
    \foreach \xy in {p8,p9,p10,p11,p12,p13}
    {
        \draw plot[s0] coordinates{(\xy)};
    }

    \draw (2,6.5*\ysc) node {\color{black!60}\LARGE $\VizSet_1$};
    \draw (7,2.5*\ysc) node {\color{black!60}\LARGE $\VizSet_2$};
    \draw (17,2.5*\ysc) node {\color{black!60}\LARGE $\VizSet_3$};
    \draw (8,9*\ysc) node {\color{black!60}\LARGE $\VizSet_4$};
    \draw [setlabelline,->,draw=black!60] (8.2, 8.2*\ysc) -- (9.7, 4.3);
    \draw (11,10*\ysc) node [] {\color{black!60}\LARGE $\VizSet_5$};
    \draw [setlabelline,->,draw=black!60] (10.9, 9.2*\ysc) -- (10.9, 7.8*\ysc);
    \draw (17,9.4*\ysc) node {\color{black!60}\LARGE $\VizSet_6$};

    \begin{scope}[yshift = -1.5cm]
        \draw[thick,decoration={brace,mirror,raise=5pt},decorate] (0,0) -- node[draw=none,below=10pt] {$\Hsp_{\Fail}$} (10,0);
        \draw[thick,decoration={brace,mirror,raise=5pt},decorate] (10,0) -- node[draw=none,below=10pt] {$\Hsp_{\Succ}$} (20,0);
    \end{scope}
    
    \end{tikzpicture}
    \caption{Visualization of some of the sets used in the proof of \Cref{lemm:Alg2:Phase1}.
    The slanted lines mark a constant $c_j$-$p_j$- or $c_j$-$(1-p_j)$-ratio, the decision criteria for \ALGSo and \ALGSz.\newline
    Markers symbolize tests in $c_j$-$p_j$-space:
    \copy\aMark{} stands for tests in $\UnqSet_\HSsucc$, \copy\rMark{} for $\ChSet_\HSsucc$ and \copy\sMark{} for $\PrivSet_\HSfail$. The symbol \copy\bMark{} stands for a test performed by \ALGSo{} not in $\UnqSet_\HSsucc$ and thus contained in $\PrivSet_\HSfail$. Combinations of markers stand for tests contained in multiple sets.
    The set $\ChSet_{\Succ}$ corresponds to $\VizSet_3 \cup \VizSet_5$, $\PrivSet_{\Fail}$ to $\VizSet_1 \cup \VizSet_2$, and $\UnqSet_{\Succ}$ to $\VizSet_3 \cup \VizSet_4 \cup \VizSet_6$. Here, $n_i = 11$ with $|\ChSet_\HSfail| =  7$ and $|\ChSet_\HSsucc| = 4$.}
    \label{fig:Alg2:set-defs}
\end{figure}
For $h \in \CB{\Fail, \Succ}$, let $\PrivSet_h$ be the subset of $|\ChSet_h|$ tests from $\Hsp_h$ with the smallest values of $\Sig{h}^{-1}$. Note that $\PrivSet_\Succ \cap \PrivSet_{\Fail} = \emptyset$.
One can think of $\PrivSet_h$ as a set \enquote{private} tests for each of the two algorithms.
In \Cref{fig:Alg2:set-defs}, $\PrivSet_\Fail$ is visualized as $\VizSet_1 \cup \VizSet_2$.
Let $\bar{h}$ be the element of $\CB{\Fail, \Succ}$ that is not $h$.
Consider now $\UnqSet_h$, defined as the set of $|\ChSet_h|$ tests from $N \setminus \PrivSet_{\bar{h}}$ with smallest values of $\Sig{h}^{-1}$ (this is well-defined since $\ChSet_h \subseteq N \setminus \PrivSet_{\bar{h}}$).
In~\Cref{fig:Alg2:set-defs} (which focuses on $h=\Succ$), $\UnqSet_\Succ$ corresponds to $\VizSet_3 \cup \VizSet_4 \cup \VizSet_6$.
This definition implies that $\ALG_h$ picks tests exclusively from $\UnqSet_h \cup \PrivSet_{\bar{h}}$ until it has queried all $|\ChSet_h|$ tests from $\UnqSet_h$: If a test is selected that is neither in $\UnqSet_h$ nor in $\PrivSet_{\bar{h}}$, all tests in $\UnqSet_h$ must have been already chosen since they have a smaller value of $\Sig{h}^{-1}$.

Let's attempt a proof similar to the idea above. Consider the first time when all tests from both $\UnqSet_\Succ$ and $\UnqSet_\Fail$ have been inspected, and let $u \in \CB{\Succ, \Fail}$ such that the last test chosen at that time was selected by $\ALG_u$.
If, as we will prove in the following lemma, the sets $\UnqSet_\Succ$ and $\UnqSet_\Fail$ are disjoint, then we have inspected at least $|\UnqSet_\Succ| + |\UnqSet_\Fail| = n_i$ unique tests and thus phase~1 must be finished.
Fortunately, we can still handle the cost: Observe that \Cref{lemm:Alg2:helper} can again be applied to bound the cost of tests in $\UnqSet_u$.
We can also bound the cost of tests that $\ALG_u$ inspects from $\PrivSet_{\bar{u}}$ using \Cref{lemm:Alg2:helper} applied to $\Hsp_{\bar{u}}$ where $\bar{u}$ is the element of $\CB{\Fail,\Succ}$ that is not $u$.
The total cost incurred by the other sub-algorithm can then be upper-bounded by the same amount thanks to the round-robin property. This approach is formalized in the following lemma.

\begin{lemma}
\label{lemm:Alg2:Phase1}
    For any $h \in \CB{\Fail,\Succ}$, it holds that
    $$\sum_{\tau=1}^{\tau_1} \cost^\TwoRR_\tau(\ALG_h,I)\leq 2\cdot\Cost{}{\OPT}{\Inst}.$$
\end{lemma}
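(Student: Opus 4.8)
The plan is to bound the cost incurred by each sub-algorithm $\ALG_h$ during phase~1 by relating the tests it performs to the cheapest $n_i$ tests in $R$, invoking \Cref{lemm:Alg2:helper} as the key tool. Fix a realization with $f(x)=i$ and consider, without loss of generality by symmetry, the case $h=\Succ$. The central obstacle, as flagged in the text preceding the lemma, is that \ALGSz and \ALGSo compete for tests inside \TwoRR, so the first $k$ tests that \ALGSo actually performs need \emph{not} be the $k$ globally \ALGSo-cheapest tests: some of them may have been ``stolen'' by \ALGSz. My plan is therefore to carefully isolate a set of tests that can be charged cleanly.

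First I would let $A_\HSsucc$ denote the set of tests that \ALGSo performs in phase~1 (up to step $\tau_1$), so that $\sum_{\tau=1}^{\tau_1}\cost^\TwoRR_\tau(\ALGSo,I)=\sum_{j\in A_\HSsucc}c_j$. The idea is to split $A_\HSsucc$ according to the partition $H_\HSsucc,H_\HSfail$ and to separately charge the high-probability part and the low-probability part. The high-probability tests $A_\HSsucc\cap H_\HSsucc$ are exactly the regime where \Cref{lemm:Alg2:helper} applies with $u=\Succ$: taking $F=R_\HSsucc$ and $G$ an appropriate superset, the lemma gives a factor-$2$ bound of the \ALGSo-early high-probability tests against the cheap high-probability tests in $R_\HSsucc$. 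For the low-probability tests $A_\HSsucc\cap H_\HSfail$ that \ALGSo nonetheless selected, I would argue that \ALGSo only prefers such a test over a cheap high-probability test when it is genuinely cheaper (the $c_j/p_j$ criterion forces $c_j<c_{j'}$ when $p_j<\tfrac12\le p_{j'}$), so these can be absorbed into the same cheap budget. The set $S_\HSfail$ in \Cref{fig:lemm:Alg2:Phase1} suggests the intended bookkeeping: tests performed by \ALGSo that lie outside $A_\HSsucc\cap H_\HSsucc$ get grouped with the low-probability tests and charged against $R$.

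The key structural fact I expect to need is a lower bound on how many tests have been performed \emph{in total} by step $\tau_1$: since phase~1 ends exactly when $t_i$ successes or $n+1-t_{i+1}$ failures have accumulated, at most $n_i-1$ tests are performed before the final phase-1 step, which bounds $\lvert A_\HSsucc\rvert$ and lets me match $A_\HSsucc$ against a set of size at most $n_i=\lvert R\rvert$. Concretely, I would set $F$ to be the high-probability cheap tests $R_\HSsucc$ and $G$ to be all tests available to \ALGSo at the relevant times, apply \Cref{lemm:Alg2:helper} to get $\sum c_j\le 2\sum_{j\in R_\HSsucc}c_j$ on the high-probability side, handle the low-probability side by the direct cost-domination argument above, and then bound $\sum_{j\in R}c_j=\sum_{j\in R_\HSsucc}c_j+\sum_{j\in R_\HSfail}c_j\le\Cost{}{\OPT}{\Inst}$ via Inequality~\eqref{eq:phase1-opt}, which says the $n_i$ cheapest tests lower-bound \OPT.

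The main difficulty, and where I would spend the most care, is the interleaving: I must ensure that the set $G$ fed into \Cref{lemm:Alg2:helper} really contains the tests \ALGSo performs as a \emph{subset} ordered by $\sigma_\Succ^{-1}$, despite \ALGSz having removed some tests from availability. The clean way around this is to observe that whenever \ALGSo is blocked from a test because \ALGSz already performed it, that test is still \emph{counted once} in the total and lies in $R$ if it is among the cheapest; removing already-performed tests only shrinks the candidate pool monotonically, so the $\sigma_\Succ^{-1}$-ordering of the tests \ALGSo actually performs is consistent with applying the lemma to a suitable $G$. Making this monotonicity precise—so that the $\lvert F\rvert$-smallest-$\sigma_\Succ^{-1}$ tests of $G$ dominate the phase-1 \ALGSo-tests—is the crux; once it is in place, combining the factor-$2$ bound with \eqref{eq:phase1-opt} yields the claimed $\sum_{\tau=1}^{\tau_1}\cost^\TwoRR_\tau(\ALG_h,I)\le 2\cdot\Cost{}{\OPT}{\Inst}$.
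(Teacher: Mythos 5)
There is a genuine gap, and it sits exactly at the point you deferred as ``the crux'': no amount of monotonicity bookkeeping can repair the direct charging, because the tests that a sub-algorithm actually performs inside \TwoRR{} need not be related to $R$ through that sub-algorithm's \emph{own} ordering at all. Concretely, suppose every test in $R$ is cheap with tiny success probability (so $R_\HSsucc=\emptyset$), while all remaining tests are expensive with success probability close to $1$. Then \ALGSo{} orders the expensive high-probability tests first (their $c_j/p_j$ ratio is small, whereas the cheap low-probability tests have huge $c_j/p_j$), and since the round-robin rule lets \ALGSo{} spend up to roughly \ALGSz's cumulative cost plus one test, \ALGSo{} does perform expensive tests during phase~1 (e.g., with $t_i=2$, $n+1-t_{i+1}=100$, cheap tests of cost $1$, expensive tests of cost $10$, \ALGSo{} performs two expensive tests before phase~1 ends). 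Both of your charging mechanisms come up empty on these tests: the helper lemma with $F=R_\HSsucc=\emptyset$ bounds nothing, and your cost-domination argument applies only to low-probability tests, which \ALGSo{} never touches here. Your decomposition would thus produce an upper bound of $0$ on a strictly positive cost. The lemma is still true on this instance, but only because \ALGSo's spending is coupled to \ALGSz's by the round-robin rule, and it is \ALGSz's tests that are related to $R$.

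That cross-charging between the two sub-algorithms is the essential idea your proposal is missing, and it is what the paper's proof supplies. The paper defines \emph{static} sets (not the sets of tests actually performed): $S_h$ is the set of $|R_h|$ tests of $H_h$ earliest in $\Sig{h}$, and $A_h$ is the set of $|R_h|$ tests of $N\setminus S_{\bar{h}}$ earliest in $\Sig{h}$. It observes that $\ALG_h$ selects only tests from $A_h\cup S_{\bar{h}}$ until all of $A_h$ has been performed, and then uses $A_\Fail\cap A_\Succ=\emptyset$, $|A_\Fail|+|A_\Succ|=n_i$, and the fact that fewer than $n_i$ tests have been conducted before step $\tau_1$ to conclude, by pigeonhole, that there is an index $u$ whose pool $A_u$ is not yet exhausted. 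The round-robin selection rule at the last phase-1 step in which $\ALG_h$ acts then charges the \emph{entire} phase-1 cost of $\ALG_h$---for either $h$, in particular $h\neq u$---to $\ALG_u$'s accumulated cost plus $\ALG_u$'s next test, all of which lies in $A_u\cup S_{\bar{u}}$; only after this switch to $u$ is \Cref{lemm:Alg2:helper} applied, twice, to bound $\sum_{j\in A_u}c_j\leq 2\sum_{j\in R_u}c_j$ and $\sum_{j\in S_{\bar{u}}}c_j\leq 2\sum_{j\in R_{\bar{u}}}c_j$. In short, the interleaving is not a technical nuisance to be argued away: it is the mechanism that makes the bound true, and any correct proof must invoke the round-robin comparison that transfers $\ALG_h$'s cost onto the other sub-algorithm.
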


\begin{proof}
    First note that, as in \Cref{eq:phase1-opt}, it holds that
    \begin{equation}\label{eq:RR2Phase1Opt}
        \Cost{}{\OPT}{\Inst} \geq \sum_{j \in \ChSet} c_j\, .
    \end{equation}
    We first observe that $\RB{\UnqSet_h \cap \Hsp_h} \subseteq \PrivSet_h $. Indeed, $\UnqSet_h$ contains tests with the smallest values of $\Sig{h}^{-1}$, just like $\PrivSet_h$. In the case of $\PrivSet_h$, these tests are only chosen from $\Hsp_h$, but in the case of $\UnqSet_h$ from $N\setminus \PrivSet_{\bar{h}}\supseteq \Hsp_h$. Symmetrically, $\RB{\UnqSet_{\bar{h}} \cap \Hsp_{\bar{h}}} \subseteq \PrivSet_{\bar{h}} $. We further observe that $\UnqSet_{\bar{h}}$ and $\UnqSet_h$ are disjoint sets:
    Consider test $j \in \UnqSet_{\bar{h}}$. If $j \in \Hsp_h$, then, by definition of $\UnqSet_{\bar{h}}$, it follows that $j \notin \PrivSet_h$. Thus, $j \notin \RB{\UnqSet_h \cap \Hsp_h}$, implying $j \notin \UnqSet_h$.
    In the case that $j \in \Hsp_{\bar h}$, we have that $j \in \PrivSet_{\bar h}$ and thus, by definition of $\UnqSet_h$, again $j \notin \UnqSet_h$.
    
    Recall the following important observation: By definition of \Cref{alg:RR} and the sets above, $\ALG_h$ exclusively selects tests from $\UnqSet_h \cup \PrivSet_{\bar{h}}$ until all tests in $\UnqSet_h$ have been inspected. The same holds, of course, for $\ALG_{\bar{h}}$ and $\UnqSet_{\bar{h}}$.
    In particular, this means that, until the beginning of step $\tau_1$, there exists an index $u \in \CB{\Fail,\Succ}$ and a test from $\UnqSet_u$ that has not yet been selected by \TwoRR{}. 
    This is true since at most $n_i - 1 = |\UnqSet_\Fail| + |\UnqSet_\Succ| - 1$ tests have been conducted in total so far and $\UnqSet_\HSfail \cap \UnqSet_\HSsucc = \emptyset$.
    
    Consider again $\ALG_h$.
    If $\ALG_h$ was not chosen by \TwoRR to conduct any test in phase~1, the inequality in the lemma statement is trivially true.
    Otherwise, let $\hat{\tau}_1 \leq \tau_1$ be the last step in which \TwoRR chose $\ALG_h$ to conduct a test before the end of phase 1.
    As in the proof of \Cref{lemm:RR:Phase1}, let $j^\star$ be the minimum value such that $\SigVal{u}{j^\star}$ has not been conducted up to and including step $\hat{\tau}_1 -1$ (i.e., $j^\star$ is the next test that $\ALG_u$ would conduct). Note that, by choice of $u$ and the arguments above, it must hold that $j^\star \in \UnqSet_u \cup \PrivSet_{\bar{u}}$.
    
    So, using the order in which $\ALG_u$ conducts tests, we have
    \begin{equation}\label{eq:RR2Phase1aux}
        \sum_{\tau=1}^{\hat{\tau}_1-1} \cost^\TwoRR_\tau(\ALG_u,I)\leq \sum_{j \in \UnqSet_u} c_{j} + \sum_{j \in \PrivSet_{\bar{u}}} c_{j} -c_{j^\star} \,.
    \end{equation}
    The way the test for step $\hat{\tau}_1$ is selected by \TwoRR, 
    \begin{align}
        &\sum_{\tau=1}^{\tau_1} \cost^\TwoRR_\tau(\ALG_h,I)=\sum_{\tau=1}^{\hat{\tau}_1} \cost^\TwoRR_\tau(\ALG_h,I) \notag\\\leq &\sum_{\tau=1}^{\hat{\tau}_1-1} \cost^\TwoRR_\tau(\ALG_u,I) + c_{j^\star} \leq \sum_{j \in \UnqSet_u} c_{j} + \sum_{j \in \PrivSet_{\bar{u}}} c_{j},\label{eq:RR2Phase1main} \,
    \end{align}
    where the third step follows from \Cref{eq:RR2Phase1aux}.
    
    Lastly, applying \Cref{lemm:Alg2:helper} with $F = \ChSet_u$ and $G = N \setminus \PrivSet_{\bar{u}}$ (and thus $G_{|\ChSet_u|} = \UnqSet_u$) as well as with $F = \ChSet_{\bar{u}}$ and $G = H_{\bar{u}}$ (and thus $G_{|\ChSet_{\bar{u}}|} = \PrivSet_{\bar{u}}$) yields
    \begin{equation}\label{eq:RR2Phase1fac2}
        \sum_{j \in \UnqSet_u} c_j \leq 2 \cdot \sum_{j \in \ChSet_u} c_j \text{  and  } \sum_{j \in \PrivSet_{\bar{u}}} c_j \leq 2 \cdot \sum_{j \in \ChSet_{\bar{u}}} c_j \, .
    \end{equation}
    Combining \Cref{eq:RR2Phase1main}, \Cref{eq:RR2Phase1fac2}, the fact that $\ChSet = \ChSet_\Succ \, \dot\cup \, \ChSet_\Fail$, and \Cref{eq:RR2Phase1Opt} implies the lemma.
\end{proof}
 
We now have all the ingredients required for proving \Cref{thm:TwoRR}.

\begin{proof}[Proof of \Cref{thm:TwoRR}.]
    The proof works in analogy to that of \Cref{thm:TwRR}. Consider instance $I'$ and $i\in\{1,\dots,B-1\}$ such that $I^\TwoRR_{\tau_1}=I'\wedge f(x)=i$ with nonzero probability. 
    Further, let $h \in \CB{\Fail, \Succ}$. Again denoting by $\PhaseTwoActive{h}$ the event that $\ALG_h$ conducts a test in phase 2, we distinguish whether or not $\PhaseTwoActive{h}$ occurs with nonzero probability under the condition $I^\TwoRR_{\tau_1}=I'\wedge f(x)=i$. In the first case, we use Lemmas~\ref{lemm:Alg2:Phase1} and~\ref{lemm:RR:Phase2} (in comparison with Theorem~\ref{thm:TwRR}, the former instead of Lemma~\ref{lemm:RR:Phase1}); in the second case, we only need Lemma~\ref{lemm:Alg2:Phase1} (again instead of Lemma~\ref{lemm:RR:Phase1}). We then combine the inequalities for the two cases and combine the resulting inequalities across all $I'$, $i$, and $h$.
    
    We again start with the case $\Ptyc{\PhaseTwoActive{h}}{I^\TwoRR_{\tau_1}=I'\wedge f(x)=i} > 0$. By first applying \Cref{lemm:RR:Phase2} with \TwoRR taking the role of \GenRR (meaning that $s$, $\tau_1$, $\tau_2$, and $E_h$ take the role of $s'$, $\tau_1'$, $\tau_2'$, and $E_h'$, respectively) and then applying Lemma~\ref{lemm:Alg2:Phase1}, we obtain
    \begin{align}
        \Ptyc{\PhaseTwoActive{h}}{I^\TwoRR_{\tau_1}=I'\wedge f(x)=i}\;\cdot\;&\Exc{\sum_{\tau=1}^{\tau_2}\cost^{\TwoRR}_\tau(\ALG_{h},I)}{I^\TwoRR_{\tau_1}=I'\wedge f(x)=i\wedge\PhaseTwoActive{h}}\notag\\
        \leq  \Ptyc{\PhaseTwoActive{h}}{I^\TwoRR_{\tau_1}=I'\wedge f(x)=i}\;\cdot\;&\Exc{\sum_{\tau=1}^{\tau_1}\cost^{\TwoRR}_\tau(\ALG_{s},I)}{I^\TwoRR_{\tau_1}=I'\wedge f(x)=i\wedge\PhaseTwoActive{h}} \notag\\
        + \;& \Exc{\Cost{}{\OPT}{\Inst}}{I^\TwoRR_{\tau_1}=I'\wedge f(x)=i}\notag\\
        \leq \Ptyc{\PhaseTwoActive{h}}{I^\TwoRR_{\tau_1}=I'\wedge f(x)=i} \;\cdot\;&2\cdot \Exc{\Cost{}{\OPT}{\Inst}}{I^\TwoRR_{\tau_1}=I'\wedge f(x)=i\wedge\PhaseTwoActive{h}}\notag\\
        +\;& \Exc{\Cost{}{\OPT}{\Inst}}{I^\TwoRR_{\tau_1}=I'\wedge f(x)=i}.\label{eq:rr2-thm-phasetwoactive}
    \end{align}
    For the case that $\Ptyc{\PhaseTwoInactive{h}}{{I^\TwoRR_{\tau_1}=I'\wedge f(x)=i}} > 0$, we only apply Lemma~\ref{lemm:Alg2:Phase1} and obtain
    \begin{align}
    &\Ptyc{\PhaseTwoInactive{h}}{I^\TwoRR_{\tau_1}=I'\wedge f(x)=i}\cdot\Exc{\sum_{\tau=1}^{\tau_2}\cost^{\TwoRR}_\tau(\ALG_{h},I)}{I^\TwoRR_{\tau_1}=I'\wedge f(x)=i\wedge\PhaseTwoInactive{h}}\notag\\
    = \;&\Ptyc{\PhaseTwoInactive{h}}{{I^\TwoRR_{\tau_1}=I'\wedge f(x)=i}}\cdot\Exc{\sum_{\tau=1}^{\tau_1}\cost^{\TwoRR}_\tau(\ALG_{h},I)}{I^\TwoRR_{\tau_1}=I'\wedge f(x)=i\wedge\PhaseTwoInactive{h}}\notag\\
    \leq \;&\Ptyc{\PhaseTwoInactive{h}}{{I^\TwoRR_{\tau_1}=I'\wedge f(x)=i}}\cdot2\cdot\Exc{\Cost{}{\OPT}{\Inst}}{I^\TwoRR_{\tau_1}=I'\wedge f(x)=i\wedge\PhaseTwoInactive{h}}.\label{eq:rr2-thm-phasetwoinactive}
    \end{align}
    
For the next step, again define $\mathcal{E}$ to be the set of events from $\{\PhaseTwoActive{h},\PhaseTwoInactive{h}\}$ with nonzero probability conditional on $I^\TwoRR_{\tau_1}=I'\wedge f(x)=i$. Then, by the law of total expectation, we obtain
    \begin{align*}
    &\Exc{\sum_{\tau=1}^{\tau_2}\cost^{\TwoRR}_\tau(\ALG_{h},I)}{I^\TwoRR_{\tau_1}=I'\wedge f(x)=i}\\
        = \;& \sum_{E\in\mathcal{E}}\Ptyc{E}{I^\TwoRR_{\tau_1}=I'\wedge f(x)=i}\cdot\Exc{\sum_{\tau=1}^{\tau_2}\cost^{\TwoRR}_\tau(\ALG_{h},I)}{I^\TwoRR_{\tau_1}=I'\wedge f(x)=i\wedge E}\\
        \leq \;& 2\cdot\Exc{\Cost{}{\OPT}{\Inst}}{I^\TwoRR_{\tau_1}=I'\wedge f(x)=i}+\Exc{\Cost{}{\OPT}{\Inst}}{I^\TwoRR_{\tau_1}=I'\wedge f(x)=i}\\
        = \;& 3\cdot\Exc{\Cost{}{\OPT}{\Inst}}{I^\TwoRR_{\tau_1}=I'\wedge f(x)=i}.
    \end{align*}
    In the second step, we use Equations~\eqref{eq:rr3-thm-phasetwoactive} (if $\PhaseTwoActive{h}\in\mathcal{E}$) and~\eqref{eq:rr3-thm-phasetwoinactive} (if $\PhaseTwoInactive{h}\in\mathcal{E}$) as well as the law of total expectation again.    
    
    By applying the law of total expectation over all $I'$ and $i\in\{1,\dots,B-1\}$ such that $I^\TwoRR_{\tau_1}=I'$ and $f(x)=i$ with nonzero probability, we obtain from the previous inequality that
    \begin{equation}
        \Ex{\sum_{\tau=1}^{\tau_2}\cost^\TwoRR_\tau\RB{\ALG_h,I}} \leq 3 \cdot \Ex{\Cost{}{\OPT}{\Inst}} \, .\label{eq:Alg2-thm-alphas}
    \end{equation}
    Now using that
    \begin{equation*}
        \cost(\TwoRR,I)=\sum_{\tau=1}^{\tau_2}\cost^\TwoRR_\tau(\ALGSz,I)+\sum_{\tau=1}^{\tau_2}\cost^\TwoRR_\tau(\ALGSo,I),    
    \end{equation*}
    we get from Inequality~\eqref{eq:Alg2-thm-alphas} that
    \begin{align*}
        \Ex{\cost(\TwoRR,I)}&\leq 6\cdot\Ex{\Cost{}{\OPT}{\Inst}},
    \end{align*}
    which completes the proof.
\end{proof}

\section{Adaptivity Gap}
\label{sec:CFA:adap}
We show the following lower bound on the adaptivity gap of SSC, establishing a tight bound in the unit-cost $k$-of-$n$ case together with the upper bound due to Grammel et al.~\cite{GrammelHKL:22}. In fact, the type of instance we use has already been used in that work, albeit for a different purpose.

\begin{theorem}\label{thm:agaptivity}
    For any $\varepsilon > 0$, there exists a unit-cost \KofN{} instance \Inst{} such that
    \begin{equation}\label{eq:adapt}
        \mathbb{E}[\cost(\overline{\OPT},{\Inst})] \geq \RB{\frac{3}{2} - \varepsilon} \cdot \mathbb{E}[\cost(\OPT,{\Inst})] \, ,
    \end{equation}
    where $\overline{\OPT}$ and $\OPT$ are optimal non-adaptive and adaptive strategies for instance $\Inst$, respectively.
\end{theorem}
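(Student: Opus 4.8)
The plan is to exhibit a one-parameter family of unit-cost \KofN{} instances on which the optimal adaptive strategy essentially only performs the unavoidable minimum number $m+1$ of tests, whereas \emph{every} non-adaptive order is forced to perform about $\tfrac32(m+1)$ tests in expectation. Concretely, I would take $n=2m+1$ tests: one \emph{pivot} test with $p=\tfrac12$, a group of $m$ \emph{high} tests with $p=1-\delta$, and a group of $m$ \emph{low} tests with $p=\delta$; and I would set $k=m+1$, so that $f(x)=2$ iff at least $m+1$ tests succeed. Here $m$ is large and $\delta>0$ is tiny; letting $\delta\to 0$ and then $m\to\infty$ drives the ratio to $\tfrac{3m+2}{2m+2}\to\tfrac32$. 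Note $n+1-k=m+1$, so determining $f$ requires at least $m+1$ tests in any realization; this trivial floor is what the adaptive strategy will match. Intuitively, in the typical realization all high tests succeed and all low tests fail, so the pivot alone decides the value; an adaptive strategy can read the pivot and then commit to the right group, while a non-adaptive order must commit up front and wastes an entire group half of the time.

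For the adaptive upper bound I would analyze the simple strategy that first conducts the pivot and then commits: on a pivot success it conducts the high tests (driving toward $m+1$ successes), on a pivot failure it conducts the low tests (driving toward $m+1$ failures), and it falls back to the remaining tests if needed. Let $G$ be the typical event that all high tests succeed and all low tests fail; since the pivot is independent of the two groups, $\Pty{G}\ge(1-\delta)^{2m}\ge 1-2m\delta$. On $G$ this strategy stops after exactly $m+1$ tests, and it never uses more than all $2m+1$ tests, so $\Ex{\Cost{}{\OPT}{\Inst}}$ is at most its expected cost, which is $(m+1)+(2m+1)\cdot 2m\delta=m+1+O(m^2\delta)$.

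The core of the argument is a matching lower bound holding simultaneously for every non-adaptive order $\pi$. Since cost is non-negative, $\Ex{\Cost{}{\overline{\OPT}}{\Inst}}\ge\Pty{G}\cdot\Exc{\Cost{}{\overline{\OPT}}{\Inst}}{G}$, so it suffices to lower-bound the conditional expected cost of an arbitrary $\pi$ given $G$. On $G$ the only remaining randomness is the pivot, and the \emph{success-tests} (the high tests plus the pivot, when it succeeds) and \emph{failure-tests} (the low tests plus the pivot, when it fails) are fixed. Conditioned on $G$ and a pivot success we have $f(x)=2$, and only $m$ failures are available, so the failure threshold $m+1$ is unreachable; hence $\pi$ must conduct \emph{all} $m+1$ success-tests, and its cost equals the position of the last success-test in $\pi$. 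Symmetrically, given $G$ and a pivot failure the cost equals the position of the last failure-test. Since only low tests can follow the last success-test (and only high tests the last failure-test), we obtain
\[ \Exc{\Cost{}{\pi}{\Inst}}{G}=\left(2m+1\right)-\tfrac12\left(a_\pi+b_\pi\right), \]
where $a_\pi$ counts the low tests after the last success-test and $b_\pi$ the high tests after the last failure-test. The key combinatorial observation is that $a_\pi+b_\pi\le m$ for \emph{every} $\pi$: any low counted in $a_\pi$ lies after all high tests, and any high counted in $b_\pi$ lies after all low tests, so if both counts were positive we would find a low after a high and a high after that very low — a contradiction. Thus one count is zero and the other is at most $m$, giving $\Exc{\Cost{}{\pi}{\Inst}}{G}\ge\tfrac{3m+2}{2}$ and hence $\Ex{\Cost{}{\overline{\OPT}}{\Inst}}\ge(1-2m\delta)\cdot\tfrac{3m+2}{2}$.

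Combining the two bounds yields a ratio of at least $\frac{(1-2m\delta)(3m+2)/2}{m+1+O(m^2\delta)}$; choosing $m$ large enough that $\tfrac{3m+2}{2m+2}>\tfrac32-\tfrac{\varepsilon}{2}$ and then $\delta$ small enough (say $\delta=m^{-3}$) that the correction terms contribute less than $\tfrac{\varepsilon}{2}$ proves the theorem. I expect the main obstacle to be making the non-adaptive lower bound rigorous for $\delta>0$ rather than in the degenerate $\delta=0$ limit: one must justify the conditioning on $G$ cleanly (using independence of the pivot from the two groups, so the pivot stays fair given $G$) and rule out that some order exploits the $O(m\delta)$-probability atypical realizations to save more than lower-order cost. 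The structural heart is the claim $a_\pi+b_\pi\le m$, which must be asserted uniformly over all orders, precisely because $\overline{\OPT}$ ranges over all of them.
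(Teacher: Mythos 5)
Your proposal is correct and, at its core, matches the paper's proof: the same instance family ($n=2m+1$, $k=m+1$, a fair pivot, $m$ near-certain successes, $m$ near-certain failures), the same adaptive strategy (read the pivot, then commit to the corresponding group), and the same lower bound $\tfrac32 m+1$ for every non-adaptive order. The one genuine divergence is how the constraint $p_j\in(0,1)$ is respected. The paper sets the group probabilities to exactly $1$ and $0$, proves the bound in that degenerate setting via a case analysis on which test a permutation places last (equivalent to your observation that $a_\pi$ and $b_\pi$ cannot both be positive), and then lifts the result to $(0,1)$ by a continuity argument: for fixed $m$, the expected cost of each of the finitely many strategies is a polynomial in the $p_j$, so both $\mathbb{E}[\mathrm{cost}(\OPT,I)]$ and $\mathbb{E}[\mathrm{cost}(\overline{\OPT},I)]$ are continuous under perturbation of the probabilities. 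You instead bake the perturbation $\delta$ into the instance from the start and control the error explicitly through the typical event $G$; your use of the non-negativity of cost to discard atypical realizations, together with the independence of the pivot from $G$, already disposes of the obstacle you flag at the end, so nothing is missing. The trade-off: the paper's continuity step avoids tracking error terms but requires a separate (if brief) argument; your version is self-contained and quantitative, at the price of carrying the $O(m^2\delta)$ corrections. Your accounting identity $\mathbb{E}[\mathrm{cost}(\pi,I)\mid G]=(2m+1)-\tfrac12\left(a_\pi+b_\pi\right)$ with $a_\pi+b_\pi\le m$ is a marginally sharper way to organize the paper's last-test case analysis, but it yields the identical bound.
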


\begin{proof}
To simplify the exposition, we first consider the slightly more general setting in which each success probability may take any value in $[0,1]$, rather than $(0,1)$. In this setting, we still consider the value of a test $j$ unknown unless conducted, even if $p_j\in\{0,1\}$. After showing the statement in this generalized setting, we use a continuity argument to lift the statement to the original setting.

Consider the following unit-cost \KofN{} instance $I_m$ parameterized by $m \in \mathbb{N}$.
Here, $n=2m+1$ and $k = m+1$, and there are $m$ tests $j\in\CB{1, \ldots, m}$ with $p_j = 1$, $m$ tests $j\in\CB{m+1, \ldots, 2m}$ with $p_j=0$, and a single additional test $n$ with $p_n = \frac{1}{2}$.
Observe that the function value of $f$ is $1$ (i.e., the number of successes lies in the first interval) if and only if $x_n = 0$. The instance is visualized in \Cref{fig:gap-instance}.

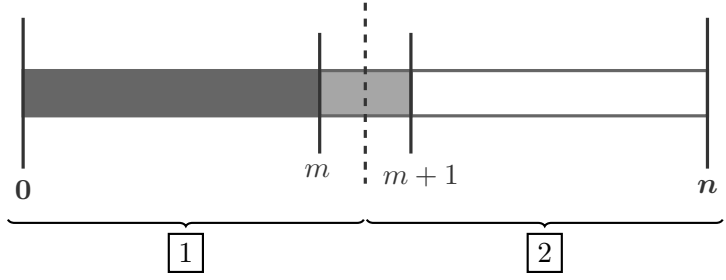
\begin{figure}
    \centering
    \begin{tikzpicture}[]
    \tikzset{bar/.style={very thick, color=black!80}}
    \tikzset{outerbar/.style={bar, very thick, node font=\boldmath}}
    \tikzset{conint/.style={pattern=north east lines}}
    
    \filldraw[color=black!60, very thick] (0, .3) rectangle (3.9, -.3);
    \filldraw[color=black!60, very thick, fill=white] (5.1, .3) rectangle (9, -.3);
    \filldraw[color=black!60, very thick, fill=black!35] (3.9, .3) rectangle (5.1, -.3);
    
    \draw[outerbar] (0, 1) -- (0, -1) node[below]{$0$};
    \draw[outerbar] (9, 1) -- (9, -1) node[below]{$n$};
    \draw[bar] (3.9, .8) -- (3.9, -.8) node[below]{$m\,$};
    \draw[bar] (5.1, .8) -- (5.1, -.8) node[below]{$\,\,\,\,m+1$};
    \draw[bar,dashed] (4.5, 1.2) -- (4.5, -1.2);
    \begin{scope}[yshift = -1.5cm]
        \draw[thick,decoration={brace,mirror,raise=5pt},decorate] (-.2,0) -- node[rectangle,draw=black,below=10pt] {1} (4.48,0);
        \draw[thick,decoration={brace,mirror,raise=5pt},decorate] (4.52,0) -- node[rectangle,draw=black,below=10pt] {2} (9.2,0);
    \end{scope}

    \end{tikzpicture}

    \caption{Visualization of the adaptivity gap instance.
    The black bar shows $m$ tests with success probability $1$. The white bar shows $m$ tests with success probability $0$.
    The gray bar corresponds to test $n$ with $p_n = \nicefrac12$.
    Note that the interval of the overall score depends only on the realization of the gray test.
    Any algorithm must inspect the gray test and at least $m$ other tests.}
    \label{fig:gap-instance}
\end{figure}

We obtain an (in fact, optimal) \emph{adaptive} strategy by first inspecting test $n$.
If $n$ is a success, we perform tests $1, \ldots, m$ to verify $f\RB{x} = 2$.
Otherwise, we inspect tests $m+1, \ldots, 2m$ and confirm that $f\RB{x} = 1$.
For any realization $x$, this incurs a total cost of $m+1$.

Consider any \emph{non-adaptive} strategy, which can be viewed as a permutation of the $n$ tests.
If the strategy chooses a test from $\CB{1, \ldots, m}$ last, i.e., as $n$-th test in the permutation, then conditioned on $x_n = 1$ (i.e., with probability $\nicefrac12$), it terminates only after having performed that last test, incurring a total cost of $2m+1$.
Conversely, a strategy that chooses a test from $\CB{m+1, \ldots, 2m}$ last pays $2m+1$ whenever $x_n = 0$, which happens again with probability $\nicefrac12$.
Lastly, a strategy that inspects test $n$ last always has to perform all tests.
This implies that the expected cost of any non-adaptive strategy is at least $\nicefrac12 \cdot\RB{m+1} + \nicefrac12 \cdot\RB{2m+1} = \nicefrac32 \cdot m + 1$.

Combining the upper bound on the (expected) cost of the adaptive strategy and the lower bound on the expected cost of any non-adaptive strategy, we obtain
\begin{equation}
    \frac{\mathbb{E}\big[\cost(\overline{\OPT},{I_m})\big]}{\mathbb{E}\big[\cost(\OPT,{I_m})\big]}\geq \frac{\frac32\cdot m + 1}{m+1} \; \underset{m\to\infty}{\longrightarrow} \; \frac32\,\label{eq:adaptivity-gap}
\end{equation}
as a lower bound on the adaptivity gap in the generalized setting with $p_j\in[0,1]$ for all tests $j$.

Towards lifting the bound to the original setting, first observe that, for any instance $I$ and any algorithm \ALG,
\begin{align*}
    \Ex{\Cost{}{\ALG}{\Inst}} \;&= \sum_{y\in\{0,1\}^N} \Pty{x=y}\cdot\Exc{\Cost{}{\ALG}{\Inst}}{x = y}\\
    &=\sum_{y\in\{0,1\}^N} \prod_{j\in N: y_j=0}(1-p_j)\prod_{j\in N: y_j=1}p_j\cdot\Exc{\Cost{}{\ALG}{\Inst}}{x = y}.
\end{align*}
Note that this expression is continuous in the success probabilities $p_j$.
Since $\OPT$ minimizes the expected cost over finitely many algorithms, $\mathbb{E}[\cost(\OPT,{\Inst})]$ is also continuous with respect to these perturbations---even though the optimal strategy may change as the parameters of the instance change.
The analogous statement for $\mathbb{E}[\cost(\overline{\OPT},{\Inst})]$ follows from the same argument.
This means that, for any $\varepsilon'>0$, there exists an instance $\Inst'_m$ in the original setting such that the adaptivity gap on that instance is by at most $\varepsilon'$ larger than on $I_m$. This, together with \Cref{eq:adaptivity-gap}, implies the claim.
\end{proof}

\section{Conclusion}
\label{sec:conclusion}

An obvious direction for future research is further improving the approximation guarantees shown in this paper. It is possible that the analyses for \TwRR and \TwoRR provided here are not tight. Note that, however, the instance given in the proof of Theorem~\ref{thm:agaptivity} provides a lower bound of $2$ on the approximation guarantees of both these algorithms. Another direction is finding simple algorithms with small approximation guarantees for the extensions considered by Ghuge et al.~\cite{GhugeGN:21}, e.g., the weighted version.

In fact, it is not known whether the SSC problem is NP-hard. The same question for similar problems, e.g., Stochastic Function Evaluation of Boolean read-once formulas, has now been open for several decades.

A more structural than algorithmic direction is settling the precise adaptivity gap beyond the unit-cost $k$-of-$n$ case.

\section*{Acknowledgments}
The authors wish to thank Marcus Kaiser for helpful discussions and an anonymous reviewer for many comments that improved the quality of our presentation.

\bibliographystyle{plain}
\bibliography{ssc}

\end{document}